\documentclass{article}
\usepackage[utf8]{inputenc}

\usepackage{epsfig}
\usepackage{graphicx}
\usepackage{amsbsy}
\usepackage{amsmath}
\usepackage{amsfonts}
\usepackage{amssymb}
\usepackage{textcomp}
\usepackage{hyperref}
\usepackage{aliascnt}
\usepackage{mathrsfs}

\newcommand{\mcm}[3]{\newcommand{#1}[#2]{{\ensuremath{#3}}}} 

\mcm{\tuple}{1}{\langle #1 \rangle}
\mcm{\name}{1}{\ulcorner #1 \urcorner}
\mcm{\Nbb}{0}{\mathbb{N}}
\mcm{\Zbb}{0}{\mathbb{Z}}
\mcm{\Rbb}{0}{\mathbb{R}}
\mcm{\Cbb}{0}{\mathbb{C}}
\mcm{\Qbb}{0}{\mathbb{Q}}
\mcm{\Acal}{0}{\cal A}
\mcm{\Bcal}{0}{\cal B}
\mcm{\Ccal}{0}{\cal C}
\mcm{\Dcal}{0}{\cal D}
\mcm{\Ecal}{0}{\cal E}
\mcm{\Fcal}{0}{\cal F}
\mcm{\Gcal}{0}{\cal G}
\mcm{\Hcal}{0}{\cal H}
\mcm{\Ical}{0}{\cal I}
\mcm{\Jcal}{0}{\cal J}
\mcm{\Kcal}{0}{\cal K}
\mcm{\Lcal}{0}{\cal L}
\mcm{\Mcal}{0}{\cal M}
\mcm{\Ncal}{0}{\cal N}
\mcm{\Ocal}{0}{{\cal O}}
\mcm{\Pcal}{0}{{\cal P}}
\mcm{\Qcal}{0}{{\cal Q}}
\mcm{\Rcal}{0}{{\cal R}}
\mcm{\Scal}{0}{{\cal S}}
\mcm{\Tcal}{0}{{\cal T}}
\mcm{\Ucal}{0}{{\cal U}}
\mcm{\Vcal}{0}{{\cal V}}
\mcm{\Wcal}{0}{{\cal W}}
\mcm{\Xcal}{0}{{\cal X}}
\mcm{\Ycal}{0}{{\cal Y}}
\mcm{\Zcal}{0}{{\cal Z}}
\mcm{\Mfrak}{0}{\mathfrak M}

\mcm{\restric}{0}{\upharpoonright}
\mcm{\upset}{0}{\uparrow}
\mcm{\onto}{0}{\twoheadrightarrow}
\mcm{\smallNbb}{0}{{\small \mathbb{N}}}
\DeclareMathOperator{\preop}{op}
\mcm{\op}{0}{^{\preop}}

\newcommand{\se}{\subseteq}
%
{\begin{array}{c}
\setlength{\unitlength}{1em}}%
{\end{array}}

\usepackage{amsthm}

\newcommand{\theoremize}[2]{\newaliascnt{#1}{thm} \newtheorem{#1}[#1]{#2} \aliascntresetthe{#1}}

\theoremstyle{plain}
\newtheorem{thm}{Theorem}[section]
\theoremize{lem}{Lemma}
\theoremize{skolem}{Skolem}
\theoremize{fact}{Fact}
\newtheorem{sublem}{Claim}[thm]
\theoremize{obs}{Observation}
\theoremize{prop}{Proposition}
\theoremize{cor}{Corollary}
\theoremize{que}{Question}
\theoremize{oque}{Open Question}
\theoremize{oprob}{Open Problem}
\theoremize{con}{Conjecture}
\theoremize{setting}{Setting}

\theoremstyle{definition}
\theoremize{dfn}{Definition}

\theoremize{rem}{Remark}
\theoremize{eg}{Example}
\theoremize{exercise}{Exercise}
\theoremstyle{plain}

\usepackage{graphicx, hyperref, xcolor, nicefrac}
\usepackage{subcaption}
\usepackage{csquotes,amsmath}
\usepackage[british]{babel}
\usepackage{tabularx}
\usepackage{algorithm}
\usepackage{algorithmic}


\newcommand{\algorithmicfunction}{\textbf{function}}
\newcommand{\algorithmicendfunction}{\algorithmicend\ \algorithmicfunction}
\makeatletter
\newcommand\FUNCTION[3][default]{%
\ALC@it \algorithmicfunction\ \textsc{#2}(#3)%
 \ALC@com{#1}%
\begin{ALC@prc}%
}
\newcommand\ENDFUNCTION{%
  \end{ALC@prc}%
  \ifthenelse{\boolean{ALC@noend}}{}{%
    \ALC@it\algorithmicendfunction
  }%
}
\newenvironment{ALC@prc}{\begin{ALC@g}}{\end{ALC@g}}
\makeatother

\title{How to apply tree decomposition ideas \\ in large networks?}
\author{Johannes Carmesin \and Sarah Frenkel}
 \date{\today}

\begin{document}

\maketitle

\begin{abstract}
Graph decompositions are the natural generalisation of tree decompositions where the decomposition tree is replaced by a genuine graph.
Recently they found theoretical applications in the theory of sparsity, topological graph theory, structural graph theory and geometric group theory. 

We demonstrate applicability of graph decompositions on large networks by implementing an efficient algorithm and testing it on road networks. 
\end{abstract}

\section{Introduction}

Tree decompositions are a powerful tool in algorithmic and structural graph theory. Their significance became apparent during the graph minors project~\cite{GMX}. 
By Courcelle's theorem~\cite{{courcelle1990monadic},{courcelle2012graph}} monadic second order graph properties can be verified in linear time on graphs of bounded tree-width. For algorithmic results on computing tree-width see Bodlaender's survey~\cite{bodlaender1998partial}, and for applications of tree decompositions beyond graph theory see \cite{diestel2016tangles} by Diestel and Whittle. 
According to google scholar 19,000 papers have \lq tree decompositions\rq\ in the title. With applications to large networks in mind, Adcock, Sullivan and Mahoney~\cite{adcock2013tree} asked for methods to study graphs that  have bounded tree-width locally but not necessarily globally. Graph decompositions\footnote{Unfortunately, the term \lq graph decomposition\rq\ has several non-related meanings in the literature. A very popular seemingly not directly related one appears in extremal graph theory.} -- the natural generalisation of tree decompositions where the decomposition tree is replaced by a genuine graph -- were suggested as a possible solution~\cite{carmesin2022local}. In this paper we implement an efficient algorithm that computes such graph decompositions; and we demonstrate the applicability on real world networks. 

\vspace{0.3 cm}

The idea of generalising tree decompositions to graph decompositions is due to Diestel and Kuhn from 2005~\cite{diestel2005graph}. At that time it was clear that graph decompositions will eventually become useful as they allow for descriptions of complicated and exciting graph classes in simple terms: just take any small graph $S$ and glue copies of $S$ together along a large decomposition graph $L$. 
However, until recently we had no methods to \emph{compute} graph decompositions of a given graph, not even theoretical ones. In a nutshell the idea that changed this~\cite{carmesin2022local} is the following. 
Tree decompositions can be understood as a recipe how to cut a graph along a set of separators simultaneously. When generalising \lq separators\rq\ to \lq local separators\rq -- meaning, vertex sets that need not separate the graph globally but just locally -- the global structure need no longer be that of a decomposition tree but can be that of a genuine graph. This implies that graph decompositions\footnote{See \autoref{sec3} and \autoref{sec5} for formal definitions.} can be used to study \emph{local-global structure} in large networks via the interplay between local separators and the global decomposition graph. 

Despite this obvious potential for applications on large networks, so far works on local separators and graph decompositions have been of more theoretical nature. 
These include the first explicit characterisation of a class of graphs by forbidden shallow minors in the context of the rapidly growing theory of sparsity, 
for an introduction see the book of 
Ne{\v{s}}et{\v{r}}il and Ossana de 
Mendez~\cite{nevsetvril2012sparsity}. In topological graph theory they allow for a new polynomial algorithm to compute surface embeddings and a corresponding Whitney-type theorem~\cite{whitney_surface}. In structural graph theory they can be used to characterise graphs without cycles of intermediate length~\cite{RajeshStructure}. In geometric group theory, they are used to study finite nilpotent groups and provide a conjecture for a possible extension of Stallings' theorem to detect product structure in finite groups~\cite{StallingsType}. How methods from infinite graph theory can be applied in this new theory is best understood through~\cite{diestel2022canonical}.

Whilst the idea for local separators is based on covering theory from topology, in~\cite{carmesin2022local} a finitary equivalent definition was provided. 
This definition is implementable in principle, however, it turns out to be too complicated to yield practical algorithms. Here, we solve this problem by providing a much simpler equivalent definition, see \autoref{equi-def} below. This theorem is also of theoretical interest; for example in~\cite{StallingsType} it is applied to simplify otherwise fairly technical arguments, saving 10 pages of computation.

The main result of this paper is an efficient algorithm that computes graph decompositions via local separators. We successfully test its applicability by computing local-global structure of large road networks.

A well established way to test an algorithm is to run it on examples with a ground truth and compare the results.
For this we choose to analyse a road network. Road maps do not come with a hard ground truth. But just looking at them we humans are able to identify structure -- like towns, rural areas and main connecting routes between them, see \autoref{fig1} and \autoref{sec7} for details.

\begin{figure}[!ht]
    \centering
     \begin{subfigure}[t]{0.48\textwidth}
     \centering        \includegraphics[scale=0.132]{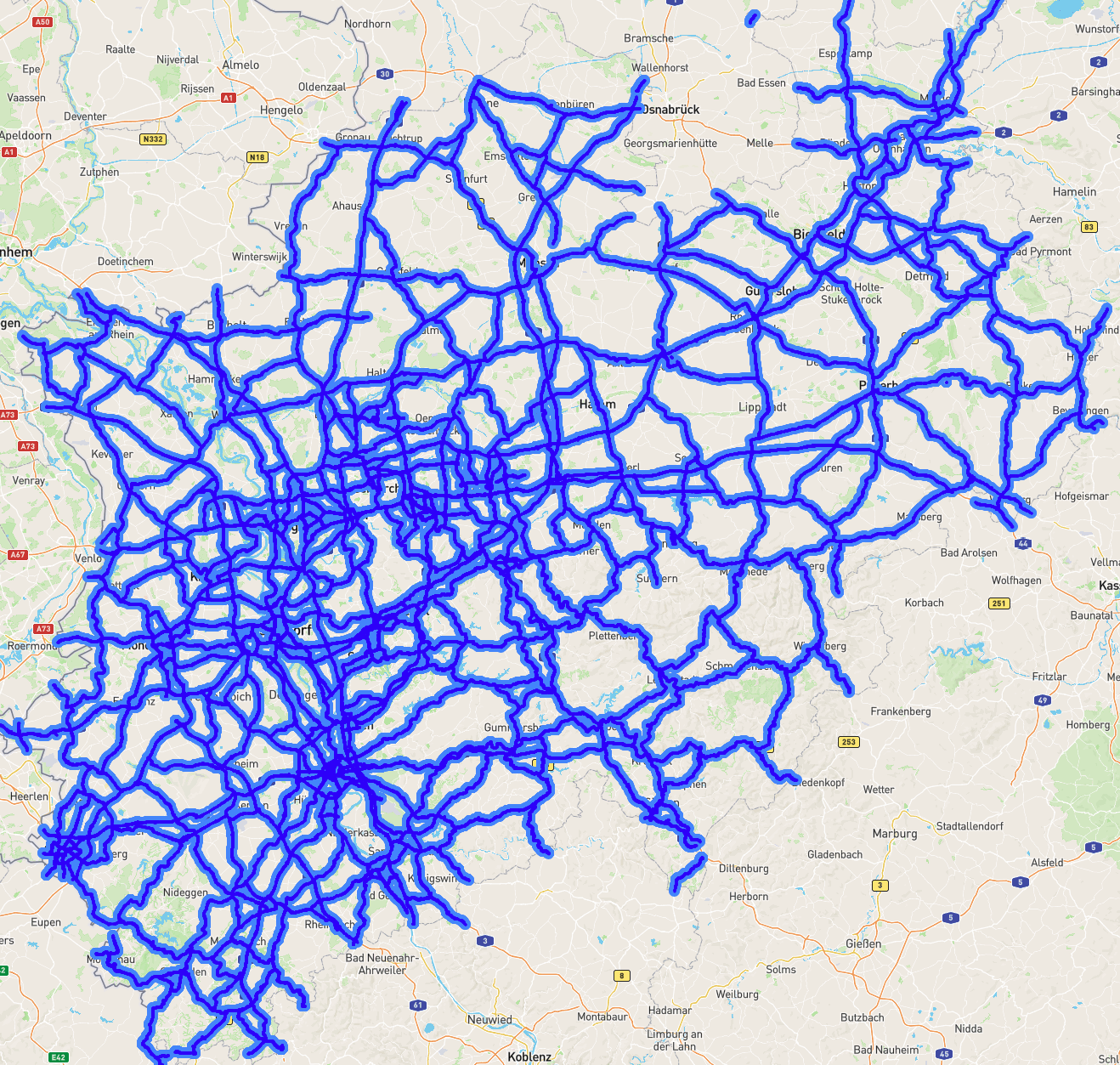} 
    \caption{}
        \label{fig:nrw_raw}
\end{subfigure}
\hfill
    \centering
        \begin{subfigure}[t]{0.48\textwidth}
        \centering
        \includegraphics[scale=0.3]{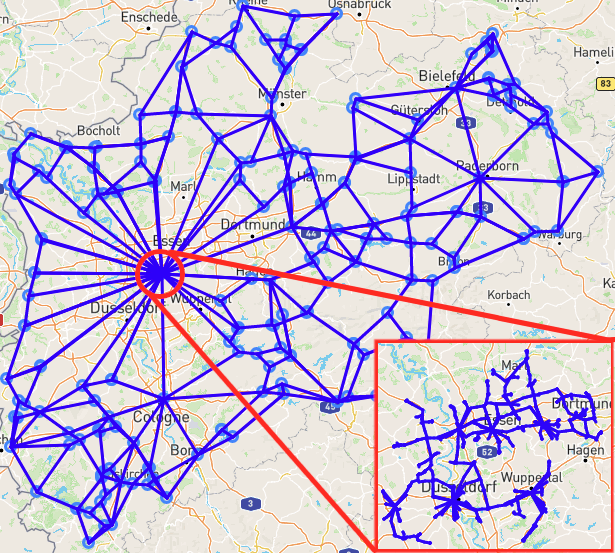} 
   \caption{}        \label{fig:decomp}
        \end{subfigure}
        \caption{North Rhine-Westphalia is a densely populated region in Germany. Its road network has about 300,000 vertices and is depicted on the left. 
         Applying our algorithm to this network, we obtain the figure on the right. The rough structure of North Rhine-Westphalia is compassed by a decomposition graph of 224 nodes, and each node of this decomposition graph represents a cluster. The red rectangle contains a graph that depicts the fine structure at one of the major clusters.}\label{fig1}
\end{figure}
Theoretical estimations of the running time are given in \autoref{sec:algorithms}. A highlight is the fast parallel running time of our algorithms.  The reason being that the decomposition we compute is \emph{canonical} and thus the decomposition can be computed in parallel at various regions of the graph, while the general theory guarantees that such local solutions will combine unambiguously into a unifying global structure. 
This fast parallel running time is particularly useful in the context of large networks.
We have demonstrated that our algorithms runs fast and stable on large networks. Our implementation calculates $1$-separators on a network with about 316,000 vertices and 322,000 edges on a standard computer in under 10 seconds. The next step is to run our algorithm on networks without ground truth to compute new local-global structure.

\vspace{.3cm}
{\bf Related Methods.} 
Constructing graph decompositions via local separators is a new tool that computes the structure of large networks. 
Next we summarise well established tools for computing structure in large networks; for a detailed account we refer to the surveys~\cite{fortunato2010community, javed2018community}.

The problem is to decompose a graph into communities (or clusters). These are vertex-sets that have only few ties with the rest of the network. 
The different approaches provide different interpretations of what constitutes \enquote{few ties}.
When the number $k$ of clusters is known advance, one can apply the $k$-means clustering~\cite{macqueen1967some}, where a graph is partitioned into $k$ vertex sets via optimization methods.
The Girvan-Newman algorithm~\cite{newman2004finding} deletes edges from a network with respect to some \enquote{betweenness measure} to derive a partition of the vertex set into clusters. Since this measure is recomputed after each deletion, this algorithm does not scale to large graphs easily. 
Random walks are a probabilistic tool that can be used to compute communities~\cite{rosvall2008maps}. 
In spectral graph theory the eigenvalues of the Laplacian of a graph are used to compute clusters~\cite{von2007tutorial}.  Adcock, Mulligan and Sullivan used 
tree decompositions directly to find clusters in graphs~\cite{adcock2013tree}, and put forward the above mentioned challenge.
Klepper et al. used tangles to find clusters in graphs~\cite{klepper2023clustering}.

The remainder of this paper is structured as follows. The short \autoref{sec:original-definitions} explains background material from \cite{carmesin2022local}. We introduce our new perspective on local 2-separators in \autoref{sec3}. In \autoref{sec5} we explain how to obtain a graph decomposition based on local separators.
We present our algorithms to compute local separators in \autoref{sec:algorithms}.  
We then apply these algorithms to analyse a large road network in \autoref{sec7} . We finish in \autoref{sec:conclusion} with our conclusions.

\section{Original definition of local separators}\label{sec:original-definitions}

Local separators were introduced in \cite{carmesin2022local}. 
In \autoref{sec3} we will present an alternative definition, better suited for the algorithmic scope of this paper.
In this section we present the terminology required to prove equivalence between these two definitions. We start with the definition of \emph{$d$-local cutvertices} and then move on to \emph{$d$-local $2$-separators}.
Given an integer $d$ and a graph $G$ with a vertex $v$, \emph{the ball} of diameter $d$ around $v$ is the subgraph of $G$ consisting of those vertices and edges of $G$ on closed walks of length at most $d$ containing $v$; we denote the ball by $D_d(v)$.

\begin{dfn}[{\cite[Definition 3.1.]{carmesin2022local}}]\label{def:Expl}
Now we give a formal definition of the explorer-neighbourhood of parameter $d$ in a graph $G$ with explorers based at the vertices $v$ and $w$ with distance at most $d/2$. The \emph{core} is the set of all vertices on shortest paths between the vertices $v$ and $w$. We take a copy of the ball $D_d(v)$ where we label a vertex $u$ with the set of shortest paths from the core to $u$ contained in the ball $D_d(v)$. Similarly, we take a copy of the ball $D_d(w)$ where we label a vertex $u$ with the set of shortest paths from the core to $u$ contained in the ball $D_d(w)$. Now we take the union of these two labelled balls -- with the convention that two vertices are identified if they have a common label in their sets; that is, there is a shortest path from the core to that vertex discovered by both explorers. (Note that the same vertex $x$ of $G$ could be in both balls but the label sets could be disjoint. In this case there would be two copies of that vertex in the union. In such a case the union would not be a subgraph of the original graph). We denote the explorer neighbourhood by $Ex_d(v, w)$. 
\end{dfn}

\begin{dfn}[{\cite[Definition 3.7]{carmesin2022local}}]\label{def:2sepEx} Given a graph $G$ with distinct vertices $v_0$ and $v_1$, we say that the set $\{v_0,v_1\}$ is a $d$-local $2$-separator if the vertices $v_0$ and $v_1$ have distance at most $d/2$ in the graph $G$ and 
the punctured explorer-neighbourhood $Ex_d(v_0,v_1) - v_0 - v_1$ is disconnected.
\end{dfn}

\begin{lem}[{\cite[Lemma 3.4.]{carmesin2022local}}]\label{lem:3.4.} Let $o$ be a cycle (or more generally a closed walk) of length at most $d$ containing vertices $v_0$ and $v_1$. Vertices of $o$ have unique copies in $Ex_d(v_0,v_1)$. 
\end{lem}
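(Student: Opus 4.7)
The plan is, for each vertex $u$ of $o$, to exhibit a single shortest path $P$ in $G$ from a vertex of the core to $u$ that is contained in both balls $D_d(v_0)$ and $D_d(v_1)$. By the identification rule of \autoref{def:Expl}, such a $P$ sits in the label of $u$ in both labelled balls, so the two copies of $u$ in the union defining $Ex_d(v_0,v_1)$ collapse into one, which is exactly what we need.

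Here is the construction. Parametrise $o$ by occurrences of $v_0$, $u$, $v_1$: without loss of generality they appear in this cyclic order, decomposing $o$ into arcs of lengths $a$ (from $v_0$ to $u$), $b$ (from $u$ to $v_1$) and $c$ (from $v_1$ back to $v_0$) with $a+b+c=|o|\le d$. In particular $\text{dist}(v_0,u)\le a$, $\text{dist}(u,v_1)\le b$ and $\delta:=\text{dist}(v_0,v_1)\le c$. As $v_0$ is an endpoint of every shortest $v_0v_1$-path, $v_0$ lies in the core, so any shortest $v_0u$-path $P=p_0p_1\cdots p_k$ in $G$ (of length $k=\text{dist}(v_0,u)\le a\le d/2$) is automatically a shortest path from the core to $u$ and thus eligible for the label.

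Containment in $D_d(v_0)$ is routine: each $p_j$ is at distance $j\le k\le d/2$ from $v_0$, and each edge $\{p_j,p_{j+1}\}$ lies on the closed walk $v_0\to p_j\to p_{j+1}\to v_0$ of length $2j+2\le 2k\le d$. The real work is $P\subseteq D_d(v_1)$. The vertex condition follows from
\[ \text{dist}(v_1,p_j)\le\min\bigl(j+\delta,\ (k-j)+\text{dist}(u,v_1)\bigr), \]
whose worst case is at most $\tfrac12(k+\text{dist}(u,v_1)+\delta)\le\tfrac12(a+b+c)\le d/2$. For each edge the naive symmetric estimate gives only $d+1$, so the key trick is to bound the two endpoints by \emph{different} triangle inequalities: $\text{dist}(v_1,p_j)\le j+\delta$ (via $v_0$) and $\text{dist}(v_1,p_{j+1})\le (k-j-1)+\text{dist}(u,v_1)$ (via $u$). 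Adding these and including the edge cancels the $j$-dependence and yields a closed walk $v_1\to p_j\to p_{j+1}\to v_1$ of length at most $k+\delta+\text{dist}(u,v_1)\le a+b+c\le d$. I expect this asymmetric mixing to be the main obstacle; once it is in place, $P$ lies in both balls and the two copies of $u$ are identified.
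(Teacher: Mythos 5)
Your overall strategy --- exhibit a single path that lies in the label of $u$ in both labelled balls, and verify containment in $D_d(v_0)$ and $D_d(v_1)$ by decomposing $o$ into the three arcs of lengths $a,b,c$ and mixing triangle inequalities through $v_0$ and through $u$ --- is exactly the standard one, and your distance estimates for the path you chose are essentially sound. (One small slip: $a\le d/2$ does not follow from $a+b+c\le d$; what you need, and what is true, is $k=\mathrm{dist}(v_0,u)\le\min(a,\,b+c)\le\tfrac12(a+b+c)\le d/2$.)

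The genuine gap is in the choice of $P$. In \autoref{def:Expl} a vertex $u$ is labelled by the set of \emph{shortest paths from the core to $u$}, where the core is the \emph{set} of all vertices on shortest $v_0$-$v_1$-paths; a shortest path from a set to $u$ is one realising $\mathrm{dist}(\mathrm{core},u)=\min_{c\in \mathrm{core}}\mathrm{dist}(c,u)$. The fact that $v_0$ lies in the core makes your $P$ a path \emph{from} the core to $u$, but not ``automatically'' a \emph{shortest} such path: an interior vertex $m$ of a shortest $v_0$-$v_1$-path may be much closer to $u$ than $v_0$ is (for instance $um$ an edge while $\mathrm{dist}(v_0,u)=\mathrm{dist}(v_0,m)+1\ge 2$), and then your $P$ does not belong to the label set of either copy of $u$, so the identification rule of \autoref{def:Expl} says nothing about it --- even though $P$ does lie in both balls. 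The repair is to run your argument on a genuine shortest path $Q=q_0q_1\cdots q_\ell$ from the core to $u$. Since $v_0$ and $v_1$ both lie in the core, $\ell\le\min(\mathrm{dist}(v_0,u),\mathrm{dist}(v_1,u))\le\min(a,b)$, and since $q_0$ lies on a shortest $v_0$-$v_1$-path, $\mathrm{dist}(v_i,q_0)\le\delta\le c$ for $i=0,1$. Then for each edge $q_jq_{j+1}$ the closed walk $v_i\to q_0\to q_j\to q_{j+1}\to u\to v_i$ has length $\mathrm{dist}(v_i,q_0)+\ell+\mathrm{dist}(u,v_i)$, which is at most $c+b+a\le d$ for $i=0$ (using $\ell\le b$) and at most $c+a+b\le d$ for $i=1$ (using $\ell\le a$); so $Q$ lies in both balls, is a common label of the two copies of $u$, and forces their identification. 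This is the route the cited proof takes.
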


The \emph{edge-space} $\mathcal{E}(G)$ of a graph $G$ is the vector space $\mathbb{F}_2^E$; that is the vector space that has one coordinate for every edge $e\in E$ of $G$ over the field $\mathbb{F}_2$.
The \emph{characteristic vector} of an edge set $F$ of a graph $G$ is the vector of the edge-space that takes the value $1$ at exactly those edges that are in $F$.
The \emph{support} of a vector is the set of its coordinates whose entries are nonzero. 

\begin{eg}The support of a characteristic vector of an edge set $F$ is equal to $F$. 
\end{eg}
We say that a cycle $o$ of a graph $G$ is \emph{generated} from a family $\Ocal$ of cycles if 
 in the edge space of $G$,
the characteristic vector of (the edge set of) $o$ is generated by the family of characteristic vectors of cycles from $\Ocal$.

\smallskip
The balls $D_d(v_0)$ and $D_d(v_1)$ are embedded within the explorer neighbourhood as described in its construction. We refer to these embedded balls as $D_d(v_0)'$ and $D_d(v_1)'$.

\begin{lem}\label{lem:3.6.}
Every closed walk $o$ of the explorer neigh\-bour\-hood $Ex_d(v_0, v_1)$ is generated from the cycles of the embedded balls $D_d(v_0)'$ and $D_d(v_1)'$ of length at most $d$.  
\end{lem}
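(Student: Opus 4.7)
The plan is to break the task into two parts: first, reduce a closed walk in $Ex_d(v_0,v_1)$ to a sum of closed walks each lying inside one of the two embedded balls; and second, inside each embedded ball, decompose such a closed walk into cycles of length at most $d$.

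For the first reduction I would use a Mayer--Vietoris style argument. By the construction in \autoref{def:Expl}, every edge of $Ex_d(v_0,v_1)$ is contributed by one of the two balls, so $D_d(v_0)'\cup D_d(v_1)'=Ex_d(v_0,v_1)$. The crucial point is that their intersection is \emph{connected}: it contains the core (a union of shortest $v_0$--$v_1$ paths, all passing through $v_0$ and $v_1$), and every identified vertex is joined to the core inside the intersection via the shortest path given by its common label. Once the intersection is connected, any closed walk $o$ can be split at the vertices where it crosses between the two balls and each segment closed up by a detour through the intersection, expressing $o$ as a sum of closed walks each supported in a single embedded ball.

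For the second reduction, fix $i\in\{0,1\}$ and take a BFS spanning tree $T$ of the embedded ball $D_d(v_i)'$ rooted at (the copy of) $v_i$. The cycle space of $D_d(v_i)'$ is generated by the fundamental cycles $F_{uw}$ of the non-tree edges $uw$, so it suffices to write each $F_{uw}$ as a sum of cycles of length at most $d$. The key length estimate is that, since $uw$ lies in the ball, it sits on a closed walk of length at most $d$ through $v_i$; splitting that walk at $u$ and $w$ yields
\[
\mathrm{dist}_{D_d(v_i)'}(v_i,u)\;+\;1\;+\;\mathrm{dist}_{D_d(v_i)'}(v_i,w)\;\le\;d.
\]
Because $T$ is BFS-optimal, the closed walk that traverses $T$ from $v_i$ to $u$, then the edge $uw$, then $T$ from $w$ back to $v_i$ has length at most $d$; over $\mathbb{F}_2$ the two tree branches from $v_i$ to $\mathrm{LCA}(u,w)$ cancel and this closed walk equals $F_{uw}$. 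Finally, any closed walk of length at most $d$ decomposes over $\mathbb{F}_2$ into edge-disjoint simple cycles each of length at most $d$, supplying the desired expression for $F_{uw}$.

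The step I expect to be the main obstacle is verifying that the intersection $D_d(v_0)'\cap D_d(v_1)'$ is connected as a subgraph of $Ex_d(v_0,v_1)$: this requires unpacking the label-identification mechanism of \autoref{def:Expl} carefully and checking that the common-label shortest paths land inside the intersection rather than being duplicated across the two embedded copies. Once this has been settled, the Mayer--Vietoris-style splitting of $o$ and the BFS-tree decomposition inside each ball are routine.
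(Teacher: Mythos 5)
Your argument is correct: the connectedness of $D_d(v_0)'\cap D_d(v_1)'$ does hold (the core lies in the intersection, and for any identified vertex the common-label shortest path to the core stays inside the intersection, because each prefix of that path is again a shortest path from the core contained in both balls and hence its vertices are also identified), and the BFS-tree/fundamental-cycle step goes through via the bound $\mathrm{dist}(v_i,u)+1+\mathrm{dist}(v_i,w)\le d$ obtained by splitting a closed walk of length at most $d$ through $v_i$ containing the edge $uw$. This is essentially the same route as the paper's, which obtains the lemma by citing Lemma~3.6 of \cite{carmesin2022local} for your first reduction (cycles of the explorer neighbourhood are generated by cycles of the embedded balls) and Lemma~2.4 of \cite{carmesin2022local} for the second (cycles of a ball of diameter $d$ are generated by its cycles of length at most $d$), so you have in effect supplied self-contained proofs of the two cited ingredients.
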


\begin{proof}
\autoref{lem:3.6.} follows from combining a variant of \cite[Lemma 3.6]{carmesin2022local} with  \cite[Lemma 2.4]{carmesin2022local};
this variant of \cite[Lemma 3.6]{carmesin2022local} is its strengthening with \lq closed walk\rq\ in place of \lq cycle\rq, which is trivially equivalent to 
 \cite[Lemma 3.6]{carmesin2022local}.
\end{proof}

\begin{rem}[{\cite[Remark 3.5.]{carmesin2022local}}]\label{lem:3.5.}
Every cycle of $Ex_d(v_0, v_1)$ of length at most $d$ containing one of the vertices $v_0$ or $v_1$, say $v_0$, is a cycle of $G$. Indeed, it is contained in the ball of diameter $d$ around $v_0$ and as such a cycle of $G$.
\end{rem}

\begin{lem}[{\cite[Lemma 3.10.]{carmesin2022local}}]\label{lem:3.10.} Let $\{v_0, v_1\}$ be a $d$-local $2$-separator in a $d$-locally $2$-connected graph $G$. For every connected component $K$ of the punctured explorer neighbourhood $Ex_d(v_0, v_1) - v_0 - v_1$, there is a cycle $o'$ of length at most $d$ containing the vertices $v_0$ and $v_1$, and $o'$ contains a vertex of the component $K$ and $o'$ contains an edge incident with $v_0$ whose other end vertex is a vertex not in $K$.
\end{lem}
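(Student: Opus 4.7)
My plan is to first construct, inside the explorer neighbourhood $Ex_d(v_0, v_1)$, a cycle $C$ that visibly passes through $v_0$, $v_1$ and a vertex of $K$ and uses an edge at $v_0$ to a vertex outside $K$; then to invoke \autoref{lem:3.6.} to express the characteristic vector of $C$, in the edge space of $Ex_d(v_0, v_1)$, as an $\mathbb{F}_2$-sum of characteristic vectors of cycles $o_1, \ldots, o_n$ of length at most $d$ lying in the embedded balls $D_d(v_0)'$ and $D_d(v_1)'$; and finally to extract from this sum one short cycle $o'$ that inherits all the required features. By \autoref{lem:3.5.}, any $o_i$ of length at most $d$ that contains $v_0$ is automatically a cycle of $G$, so the extracted $o'$ is then a cycle of $G$ of length at most $d$, as required.

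To build the initial cycle $C$, I would first use $d$-local 2-connectedness to show that $v_0$ has a neighbour in $K$ inside $Ex_d(v_0, v_1)$. The hypothesis that $v_1$ is not a $d$-local cutvertex translates, via the embedding of $D_d(v_1)'$ into $Ex_d(v_0, v_1)$, into the statement that $Ex_d(v_0, v_1) - v_1$ is connected, since $Ex_d(v_0, v_1)$ is the union of the two embedded balls and these balls share both $v_0$ and $v_1$. Consequently, any vertex of $K$ has a path in $Ex_d(v_0, v_1) - v_1$ to vertices outside $K \cup \{v_1\}$, and since such a path can leave $K$ only through $v_0$ or $v_1$, the vertex $v_0$ must have a neighbour in $K$. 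The symmetric argument yields a neighbour of $v_1$ in $K$, and applying the same reasoning to any further component $K' \neq K$ (which exists because $\{v_0, v_1\}$ is a $d$-local 2-separator) yields a neighbour of $v_0$ outside $K$. Concatenating a path from $v_0$ to $v_1$ with internal vertices in $K$ with a path from $v_1$ back to $v_0$ with internal vertices outside $K$ then produces the desired cycle $C$ in $Ex_d(v_0, v_1)$.

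The step I expect to be the main obstacle is the final extraction: from the decomposition $C = o_1 + \cdots + o_n$, produce a single $o'$ containing $v_0$, $v_1$, a vertex of $K$, and with at least one edge at $v_0$ outside $K$. A direct parity argument only shows that each of the two distinguished edges of $C$ at $v_0$ (one into $K$, one outside) appears in an odd number of the $o_i$, which does not force them to coexist in a common $o_i$. My tentative plan is to analyse the auxiliary multigraph on the neighbours of $v_0$ in which every cycle $o_i$ through $v_0$ contributes the edge joining its two $v_0$-neighbours; the parity constraints should then force a walk in this multigraph from a $K$-neighbour of $v_0$ to a non-$K$-neighbour, and chaining the corresponding cycles together and either minimising the length of $C$ or the number of cycles in the chain should collapse it to a single cycle of length at most $d$ with the required properties. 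A final small check, invoking the definition of the component $K$, then rules out the degenerate possibility that the extracted cycle misses $v_1$ or fails to enter $K$: since its $v_0$-edge outside $K$ leaves $K$, it must re-enter $K$ only through $v_1$, so $v_1$ and a $K$-vertex must both lie on it.
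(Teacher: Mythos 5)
A remark on scope first: the paper states this lemma only as a quotation from the cited reference and gives no proof of it, so there is no in-paper argument to compare you against; I assess your proposal on its own terms. Your architecture (build a long witnessing cycle $C$ in $Ex_d(v_0,v_1)$, decompose it via \autoref{lem:3.6.}, extract one short cycle) is sound, and the construction of $C$ works, with one caveat: the connectedness of $Ex_d(v_0,v_1)-v_1$ does not follow merely from the two embedded balls sharing $v_0$ and $v_1$, since a component of $D_d(v_0)'-v_1$ not containing $v_0$ could a priori be stranded. To repair this, note that every vertex of such a component lies on a closed walk of length at most $d$ through $v_0$ that is forced to pass through $v_1$, hence that walk lies in $D_d(v_1)$ as well and its vertices have unique copies by \autoref{lem:3.4.}; the vertex is therefore reconnected to $v_0$ inside $D_d(v_1)'-v_1$, which is connected by $d$-local $2$-connectedness.

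The genuine gap is the extraction step, and the repair you sketch would not work as stated: \enquote{chaining the corresponding cycles together} means combining several cycles of length at most $d$ by symmetric difference or concatenation, and neither operation returns a single cycle of length at most $d$; minimising the length of $C$ or the number of cycles likewise gives no control over any individual $o_i$. But no chaining is needed. In your auxiliary multigraph $H$ on the neighbours of $v_0$ in $Ex_d(v_0,v_1)$, the degree of a neighbour $z$ is the number of $o_i$ containing the edge $v_0z$, which is odd exactly for the two $C$-neighbours $a\in K$ and $a'\notin K$; by handshaking, $a$ and $a'$ lie in the same component of $H$, so some walk in $H$ joins them, and since $a\in K$ while $a'\notin K$ this walk contains a \emph{single} edge with one end $x\in K$ and the other end $y\notin K$. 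The one cycle $o_i$ giving rise to that edge is already the desired $o'$: it has length at most $d$, contains $v_0$ and the edges $v_0x$, $v_0y$, and is a cycle of $G$ by \autoref{lem:3.5.}; the path $o_i-v_0$ runs from $x\in K$ to $y\notin K$ while avoiding $v_0$, so it must pass through $v_1$, the only other vertex of $Ex_d(v_0,v_1)$ with a neighbour in $K$ (and if $y=v_1$ then $v_1\in o_i$ trivially); and $v_0y$ is an edge at $v_0$ whose other end is not in $K$. So your \enquote{final small check} is exactly right --- it just has to be applied to one cycle of the decomposition rather than to a chained object.
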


\section{A new perspective on local 2-separators}\label{sec3}

In this section we propose an alternative definition of local 2-separators and show that it is equivalent to the one introduced in \cite{carmesin2022local}.
This definition is simpler and has better algorithmic properties and is the basis of the algorithm presented in this paper. 
Given a vertex set $X$, we denote by $N(X)$ the \emph{set of neighbours of $X$}; that is, the vertices outside $X$ that have a neighbour in $X$. 

\begin{dfn}[Connectivity graph]\label{def:ConnectivityGraph}
Given a graph $G$, an integer $d$ and two distinct vertices $v_0,v_1 \in V(G)$ with distance at most $d/2$,
the \emph{connectivity graph} $C$ has the vertex set $N=N(\{v_0,v_1\})$.
And $xy$ is an edge of $C$ if there is a path between $x$ and $y$ in $D_d(v_i)-v_0-v_1$ for some $i=0,1$. 

We denote the connectivity graph by $C_d(v_0,v_1,G)$ or simply $C$ when the other data is clear from the context.
\end{dfn}

The goal of this section is to prove the following:

\begin{thm}\label{equi-def}
Let $d\in \mathbb{N}$ with $d\geq 2$, and $G$ be a graph with vertices $v_0$ and $v_1$ of distance at most $d/2$.
Then $\{v_0,v_1\}$ is a $d$-local separator if and only if the connectivity graph $C_d(v_0,v_1,G)$ is disconnected.
\end{thm}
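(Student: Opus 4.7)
My plan is to prove both directions via two intermediate results: a \emph{helper lemma} and a \emph{key claim} about copies of $N$-vertices in the punctured explorer neighbourhood $Ex_d(v_0,v_1) - v_0 - v_1$.

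The helper lemma asserts that every vertex of $Ex_d(v_0,v_1) - v_0 - v_1$ is connected, inside $Ex_d(v_0,v_1) - v_0 - v_1$, to some copy of a vertex in $N$. To prove it I take a vertex $u$ lying in an embedded ball $D_d(v_j)'$ and consider the shortest $v_j$-to-$u^\ast$ path $P$ in $G$, where $u^\ast$ denotes the underlying $G$-vertex. Since $u^\ast \in D_d(v_j)$, $P$ has length at most $d/2$. If $P$ avoids $v_{1-j}$ its second vertex is already a neighbour of $v_j$ in $N$; otherwise the subpath of $P$ starting at $v_{1-j}$ still lies inside $D_d(v_j)$ by a closed-walk argument, and its second vertex is a neighbour of $v_{1-j}$ in $N$. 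In either case a path in $D_d(v_j)' - v_0 - v_1$ connects $u$ to a copy of an $N$-vertex.

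The key claim states that if two copies $\tilde x, \tilde y$ of $x, y \in N$ lie in the same component of $Ex_d(v_0,v_1) - v_0 - v_1$, then $x, y$ lie in the same component of $C$. To prove it I decompose a path from $\tilde x$ to $\tilde y$ into maximal segments each lying inside a single embedded ball, and project each segment to a path in the corresponding $D_d(v_{j_i}) - v_0 - v_1$. At each junction between consecutive segments, the identification condition in the construction of $Ex_d$ guarantees a shortest path from some core vertex to the junction's underlying $G$-vertex that lies in both balls; extending it by the appropriate portion of the shortest $v_0$-to-$v_1$ path through that core vertex yields a neighbour of $v_0$ or $v_1$ serving as a shared $N$-representative for both adjacent segments. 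Chaining these representatives via $C$-edges (one per segment) produces a walk in $C$ from $x$ to $y$.

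With these two pieces both directions follow. For ($\Leftarrow$), any non-trivial partition of $N$ into unions of $C$-components separates the copies of its parts into distinct components of $Ex_d(v_0,v_1) - v_0 - v_1$ by the contrapositive of the key claim; both sides are non-empty since every $x \in N$ has at least one copy in $Ex_d(v_0,v_1) - v_0 - v_1$, so the latter is disconnected. For ($\Rightarrow$), two components of $Ex_d(v_0,v_1) - v_0 - v_1$ supply $N$-copies via the helper lemma, and the key claim applied to the sets $S(K)=\{z\in N : \text{a copy of $z$ lies in $K$}\}$ forces $N$ to split across at least two $C$-components. The principal technical difficulty is the chaining step of the key claim, since junction vertices need not belong to $N$ themselves; extracting compatible $N$-representatives on both sides of a junction relies on the shared shortest path from the core that witnesses vertex identification in $Ex_d$.
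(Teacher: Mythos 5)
Your helper lemma and your key claim look workable: the junction argument does go through, since the shortest path from the core that witnesses the identification of the junction vertex, extended by an initial segment of a shortest $v_0$--$v_1$-path through its starting core vertex, lies in both balls and, after trimming off $v_0$ and $v_1$, joins the junction's underlying vertex to a common $N$-representative inside both punctured balls. Together with the segment decomposition this yields the implication \lq\lq $C$ disconnected $\Rightarrow$ punctured explorer neighbourhood disconnected\rq\rq\ by a direct combinatorial argument, bypassing the machinery the paper uses for \autoref{intermediate} (the edge-space map $\varphi$, the handshaking argument, and the reduction to distance at least three via the subdivision $G^k$). That half of your proof is a genuinely different and arguably more elementary route.

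The gap is in your ($\Rightarrow$) direction. Your key claim says: copies lying in the same component of $Ex_d(v_0,v_1)-v_0-v_1$ have underlying vertices in the same component of $C$. Applied to two components $K_1,K_2$ of the punctured explorer neighbourhood, it tells you only that each of $S(K_1)$ and $S(K_2)$ is contained in a \emph{single} component of $C$; nothing forces these two components of $C$ to be distinct (the sets $S(K_1)$ and $S(K_2)$ could even intersect, since a vertex of $N$ may have copies in several components), so $N$ is not forced to split. What this direction actually requires is the converse implication: if $x$ and $y$ are joined in $C$, then copies of $x$ and $y$ lie in the same component of the punctured explorer neighbourhood -- and, crucially, that these copies can be chosen consistently along a path in $C$. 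The obstruction is that a vertex of $N$ may have two distinct, non-identified copies in $Ex_d(v_0,v_1)$, one in each embedded ball: a $C$-edge witnessed in $D_d(v_0)$ connects the $D_d(v_0)'$-copies of its ends, while the next $C$-edge along the path may be witnessed in $D_d(v_1)$ and connect the $D_d(v_1)'$-copies, so the chain breaks unless the two copies of the intermediate vertex coincide. The paper resolves exactly this point in \autoref{lem:neigbInH}, using \autoref{lem:3.6.} and \autoref{lem:3.4.} to show that the relevant vertices lie on short closed walks through $v_0$ or $v_1$ and therefore have \emph{unique} copies. Your proposal contains no substitute for this uniqueness argument, so the ($\Rightarrow$) direction is not established.
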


\begin{lem}\label{lem:neigbInH}
Let $x$ and $y$ be two adjacent vertices in the connectivity graph $\mathcal C_d(v_0,v_1)$. Then the vertices $x$ and $y$ have unique copies in
$Ex_d(v_0,v_1)$ and these unique copies are in the same component of the punctured explorer neighbourhood $Ex_d(v_0,v_1)-v_0-v_1$.
\end{lem}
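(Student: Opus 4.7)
The strategy follows the two parts of the conclusion: first produce copies of $x$ and $y$ in a common component of the punctured neighbourhood using the path supplied by adjacency in $C_d(v_0,v_1)$, then upgrade this to the statement about the unique copies.

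For the same-component part, the hypothesis together with \autoref{def:ConnectivityGraph} supplies an index $i \in \{0,1\}$ and a path $P$ from $x$ to $y$ lying in $D_d(v_i)-v_0-v_1$; I may assume $i=0$. The construction in \autoref{def:Expl} embeds the ball $D_d(v_0)$ into $Ex_d(v_0,v_1)$ as $D_d(v_0)'$, and this embedding transports $P$ to a path $P'$ in $D_d(v_0)'\subseteq Ex_d(v_0,v_1)$ whose endpoints are the copies of $x$ and $y$ sitting in $D_d(v_0)'$. Because $P$ avoids $v_0$ and $v_1$, so does $P'$, and hence these two copies lie in a common component of $Ex_d(v_0,v_1)-v_0-v_1$.

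For uniqueness, I would invoke \autoref{lem:3.4.} separately for $x$ and for $y$. Concretely, for $x$ one exhibits a closed walk $o_x$ in $G$ of length at most $d$ that passes through each of $v_0$, $v_1$ and $x$. The raw material is the edge joining $x$ to its neighbour in $\{v_0,v_1\}$ (which exists since $x\in N(\{v_0,v_1\})$) together with a shortest $v_0 v_1$-path of length at most $d/2$; the pieces are assembled to visit the three required vertices. An analogous walk $o_y$ is built for $y$. Combined with the same-component paragraph, the copies in $D_d(v_0)'$ exhibited there are forced to be the unique copies provided by \autoref{lem:3.4.}, finishing the argument.

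The main obstacle is bounding the length of $o_x$ (and $o_y$) by $d$: a naive concatenation of the neighbour edge with two traversals of a shortest $v_0 v_1$-path can overshoot. Avoiding this will require a short case analysis on which of $v_0, v_1$ the vertex $x$ is adjacent to and on how the chosen edge lines up with the shortest $v_0 v_1$-path; in degenerate cases one can hope instead to observe that $x$ lies in only one of the balls $D_d(v_0), D_d(v_1)$, so uniqueness is automatic from the construction of $Ex_d(v_0,v_1)$ and no walk is needed.
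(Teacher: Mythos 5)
The first half of your argument (transporting the path $P\subseteq D_d(v_i)-v_0-v_1$ into the embedded ball to place the two copies in a common component of the punctured explorer neighbourhood) is correct and is exactly how the paper finishes. The genuine gap is in the uniqueness half, and it is precisely the obstacle you flag at the end: a closed walk of length at most $d$ in $G$ through all three of $v_0$, $v_1$ and $x$ need not exist, and no case analysis will manufacture one. Any such walk has length at least $d(v_0,x)+d(x,v_1)+d(v_1,v_0)$. Take $v_0$ and $v_1$ at distance exactly $d/2$, and let $x$ be a neighbour of $v_0$ with $d(x,v_1)=d/2$, realised by two internally disjoint $x$--$v_1$ paths of length $d/2$ whose union is a cycle of length $d$ through $v_1$ and $x$ (so $x\in D_d(v_0)\cap D_d(v_1)$); adding a common neighbour $y$ of $x$ and $v_0$ makes $xy$ an edge of $C$. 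Here the lower bound is $1+d/2+d/2>d$, so the walk you want does not exist, and your fallback (\lq\lq{}$x$ lies in only one ball\rq\rq) does not apply either. So the hypotheses of the lemma genuinely allow configurations where neither branch of your uniqueness argument goes through.

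The paper obtains uniqueness by a different mechanism, and this is the idea your proposal is missing. It forms the possibly long closed walk $o$ from $v_xPv_y$ -- closing it up with a shortest $v_0$--$v_1$-path when $v_x\neq v_y$ -- and observes that $o$ traverses the edge $xv_x$ exactly once (unless $x$ lies in the core, in which case uniqueness is immediate from the construction of $Ex_d(v_0,v_1)$). By \autoref{lem:3.6.}, $o$ is generated in the edge space by cycles of the embedded balls of length at most $d$; by parity some member of this generating family must contain the edge $xv_x$, and it is this short cycle, produced by a cycle-space decomposition rather than by direct construction, that is fed into \autoref{lem:3.4.} to certify that $x$ has a unique copy (likewise for $y$ via $yv_y$). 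In short: do not try to build the short witness cycle by hand; extract it from a decomposition of the long walk you already have.
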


\begin{proof}
As $x$ and $y$ are adjacent in $C$, there is some $i\in \{0,1\}$ such that there is an $x$-$y$-path $P$ contained in $D_d(v_i)-v_0-v_1$.
As $x$ and $y$ are vertices of $C$, they are in $N$, so there are vertices $v_x,v_y\in \{v_0,v_1\}$ so that $xv_x$ and $yv_y$ are edges.
Then $v_xPv_y$ is a walk in the explorer neighbourhood $Ex_d(v_0,v_1)$. Now we define a closed walk $o$. If the walk $v_xPv_y$ is closed, we set $o$ to be equal to it; otherwise its end vertices are $v_0$ and $v_1$ and we obtain $o$ from it by adding a shortest $v_0$-$v_1$-path.
This ensures that $o$ is a closed walk of the explorer neighbourhood $Ex_d(v_0,v_1)$ that contains any of the edges $xv_x$ and $yv_y$ exactly once -- unless the vertex $x$ or $y$, respectively, is in the core and thus has a unique copy in $Ex_d(v_0,v_1)$.

According to \autoref{lem:3.6.} $o$ is generated by a set  of cycles $\mathcal C$ of the explorer neighbourhood $Ex_d(v_0,v_1)$ of length at most $d$. The cycles containing the edges $xv_x$ and $yv_y$ witness by \autoref{lem:3.4.} that $x$ and $y$ have unique copies in the explorer neighbourhood $Ex_d(v_0,v_1)$.
We denote these unique copies by $x$ and $y$, respectively. 
So $P$ is an $x$-$y$-path of $Ex_d(v_0,v_1)$. And as it avoids $v_0$ and $v_1$, the vertices $x$ and $y$ are in the same component of $Ex_d(v_0,v_1)-v_0-v_1$.
\end{proof}

\begin{rem}(Motivation)
We shall see that \autoref{lem:neigbInH} fairly easily gives one implication of \autoref{equi-def}; next we shall prove the other implication under the additional assumption that the vertices $v_0$ and $v_1$ have distance at least three and then we finish off by deducing the general case from this. 
\end{rem}

Our intermediate goal is to prove the following lemma.

\begin{setting}\label{X}
Fix a parameter $d\in \Nbb$, a graph\footnote{All graphs considered in this paper have neither parallel edges nor loops.} $G$, vertices $v_0$ and $v_1$ of $G$ of distance at most $d/2$ and distance at least three. 
Abbreviate the edge space $\mathcal{E}(C)$ of the connectivity graph $C=C_d(v_0,v_1,G)$ by $\mathcal{E}$.
\end{setting}

For a vertex $x$ of $Ex_d(v_0,v_1)$, we denote by $x'$ the unique vertex of $G$ of which $x$ is a copy. 

\begin{lem}\label{intermediate}
Assume \autoref{X}.
    If $x$ and $y$ are neighbours of $\{v_0,v_1\}$ in the same component of $Ex_d(v_0,v_1)-v_0-v_1$, then the vertices $x'$ and $y'$ are in the same component of $C$. 
\end{lem}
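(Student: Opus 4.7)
The plan is to take any path $P$ from $x$ to $y$ in $Ex_d(v_0,v_1)-v_0-v_1$, assign each edge of $P$ a colour in $\{0,1\}$ corresponding to one of the two balls containing it, and decompose $P$ into maximal monochromatic subpaths $P = P_1 P_2 \cdots P_m$, with $P_j$ using only edges of $D_d(v_{c_j})$ and consecutive colours differing. Let $x = z_0, z_1, \ldots, z_{m-1}, z_m = y$ be the joining vertices. Each $P_j$ projects to a simple path $P_j'$ in $D_d(v_{c_j}) - v_0 - v_1$, since identifications in $Ex_d(v_0,v_1)$ only occur between copies from different balls, so the projection is injective within a single ball. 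Every internal joining vertex $z_j$ is incident with edges from both balls in $Ex_d(v_0,v_1)$, which forces its copy to be an identified vertex, namely a single one arising from both $D_d(v_0)'$ and $D_d(v_1)'$ via a common label.

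The key intermediate claim to aim for is: for every identified vertex $z$ of $Ex_d(v_0,v_1)$ with $z' \in D_d(v_0) \cap D_d(v_1) - v_0 - v_1$, there exists $w \in N$ lying in the same component as $z'$ both in $D_d(v_0) - v_0 - v_1$ and in $D_d(v_1) - v_0 - v_1$. Given this, pick such a $w_j$ for each breakpoint $z_j$ and consider the sequence $x', w_1, w_2, \ldots, w_{m-1}, y'$ in $N$. Each consecutive pair is $C$-adjacent: the projection $P_{j+1}'$ places $z_j'$ and $z_{j+1}'$ into the same component of $D_d(v_{c_{j+1}}) - v_0 - v_1$, and by the choice of $w_j$ and $w_{j+1}$ that component also contains both of them; the ends $x' = z_0'$ and $y' = z_m'$ are handled analogously via $P_1'$ and $P_m'$. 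The resulting walk in $C$ places $x'$ and $y'$ in the same component.

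The main obstacle will be proving the key claim. Identification of $z$ yields a shortest path $\pi$ from some core vertex $u_0$ to $z'$ contained in both balls. A short minimality argument (using $v_0, v_1 \in$ core) shows $\pi$ can pass through $v_0$ only if $u_0 = v_0$, and symmetrically for $v_1$; moreover, when $u_0 = v_0$ the assumption $d(v_0, v_1) \geq 3$ rules out $v_1 \in \pi$, since otherwise the subpath from $v_1$ to $z'$ would give a strictly shorter path from the core vertex $v_1$ to $z'$. Hence in the cases $u_0 \in \{v_0, v_1\}$, the second vertex $u_1$ of $\pi$ lies in $N(v_0)$ or $N(v_1)$, and $\pi$ with its endpoint at $v_0$ (or $v_1$) removed provides a path in both balls minus $\{v_0, v_1\}$ linking $z'$ to $w = u_1 \in N$. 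In the remaining case $u_0 \notin \{v_0, v_1\}$, $\pi$ itself avoids both $v_0$ and $v_1$ and thus lies in both balls minus $\{v_0, v_1\}$; fix any shortest $v_0$-$v_1$ path $c_0 = v_0, c_1, \ldots, c_l = v_1$ through $u_0$ (which exists because $u_0$ is in the core), and note that $l = d(v_0, v_1) \leq d/2$ forces all its vertices and edges to lie in both balls, so $c_1, c_2, \ldots, c_{l-1}$ is a path in both balls minus $\{v_0, v_1\}$ passing through $u_0$ and linking $c_1 \in N(v_0)$ to $c_{l-1} \in N(v_1)$. Concatenating with $\pi$ then delivers the required $w = c_1$.
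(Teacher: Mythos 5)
Your proof is correct, and it takes a genuinely different route from the paper's. The paper derives \autoref{intermediate} from \autoref{cor1} and \autoref{lem:SameCompDiffV}, whose proofs run through the cycle space: the connecting path in the punctured explorer neighbourhood is closed up to a cycle, that cycle is generated by cycles of $G$ of length at most $d$ (\autoref{lem:3.6.}), the linear map $\varphi$ turns the generating family into an edge set of $C$ whose odd-degree vertices are exactly $x'$ and $y'$, and the handshaking lemma finishes; bridging from a neighbour of $v_0$ to a neighbour of $v_1$ additionally invokes \autoref{lem:3.10.}. You instead argue directly on the connecting path: split it at the vertices where it switches between the two embedded balls, observe that each switch-point is an identified vertex, and use the identification label --- a shortest path from the core contained in both balls --- to anchor each switch-point to a vertex of $N$ reachable from it inside both punctured balls. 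Your key claim is where the hypotheses of \autoref{X} enter (distance at least three guarantees the anchoring vertex lies in $N$, and $d(v_0,v_1)\le d/2$ guarantees a shortest $v_0$-$v_1$-path lies in both balls), which is essentially the same role they play in the paper's \autoref{cl1}. What your approach buys is self-containedness and elementarity: it needs only the definition of the explorer neighbourhood and avoids both the cycle-space machinery and the appeal to \autoref{lem:3.10.} (which formally carries a local-$2$-connectedness hypothesis). What the paper's approach buys is brevity, given that this machinery is already established in \cite{carmesin2022local} and reused elsewhere.
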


\begin{dfn}
Assume \autoref{X}.
We define a function $\varphi$ mapping cycles $o$ of $G$ of length at most $d$ to $\mathcal{E}$.
\begin{itemize}
\item If $o$ contains neither $v_0$ nor $v_1$, then we map $o$ onto the zero-vector; that is, $\varphi(o)=\vec{0}$.
    \item If $o$ contains exactly one of the vertices $v_i$, it contains exactly two edges $e$ and $f$ with end vertices in the set $\{v_0,v_1\}$.
     Let $x$ and $y$ be the end vertices of these edges different from either $v_j$. Then $o$ witnesses that $xy$ is an edge of the connectivity graph $C$, and we map $o$ to the characteristic vector of the edge $xy$. 
     \item Otherwise, $o$ contains both vertices $v_0$ and $v_1$. Thus it is composed of two $v_0$-$v_1$-paths, each having at least four vertices by \autoref{X}; call them $P^1$ and $P^2$. Let $x^j$ be the second vertex of $P^j$ and $y^j$ be its second but last vertex. 
     Since $P^j$ contains at least four vertices, the cycle $o$ witnesses that $x^jy^j$ is an edge of the connectivity graph $C$. We let $\varphi(o)$ be the characteristic vector of the set 
     $\{x^1y^1, x^2y^2\}$.
\end{itemize}
\end{dfn}

Given  a family $\mathcal{O}$ of cycles of $G$ of length at most $d$, we define $\varphi(\mathcal{O})$ via linear extension; that is:
$\varphi(\mathcal{O})=\Delta_{o\in \mathcal{O}} \varphi(O)$ (recall that the symmetric difference $\Delta$ is the addition over the binary field). 

For each $x\in N$, one of the pairs  $v_0x$ and $v_1x$ is an edge of $G$; and this pair is unique in the context of \autoref{X}.

\begin{lem}\label{pre-alg}
Assume \autoref{X}.
    Let  $\mathcal{O}$ be a family of cycles of $G$ of length at most $d$. Let $x$ be a vertex of $C$. 
    Then the parity\footnote{The \emph{parity} of a number is that number modulo two.} of the degree of $x$ in $\varphi(\mathcal{O})$ is equal to the number of cycles of $\mathcal{O}$ that contain one of the edges $v_0x$ or $v_1x$ (modulo 2).
\end{lem}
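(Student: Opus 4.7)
The plan is to reduce the identity to the level of an individual cycle by $\mathbb{F}_2$-linearity and then verify it case-by-case according to how $o \in \mathcal{O}$ meets $\{v_0,v_1\}$. Since $\varphi(\mathcal{O}) = \Delta_{o \in \mathcal{O}} \varphi(o)$, and the parity of the degree of $x$ in a symmetric difference of edge sets equals the sum modulo two of the individual degrees, it suffices to show that for every cycle $o$ of $G$ of length at most $d$,
\[
  \deg_x\!\bigl(\varphi(o)\bigr) \;\equiv\; \bigl|\{v_0x, v_1x\} \cap E(o)\bigr| \pmod 2.
\]
The distance-at-least-three hypothesis from \autoref{X} rules out a common neighbour of $v_0$ and $v_1$, so at most one of $v_0x$ and $v_1x$ is even an edge of $G$; in particular the two possible readings of \enquote{the number of cycles containing one of $v_0x$, $v_1x$} coincide modulo two.

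I would then run through the three branches of the definition of $\varphi$. If $o$ avoids both $v_0$ and $v_1$, then $\varphi(o) = \vec{0}$ and neither $v_0x$ nor $v_1x$ lies in $E(o)$, so both sides vanish. If $o$ contains exactly one of $v_0,v_1$, say $v_i$, and $x',y'$ denote its two neighbours in $o$, then $\varphi(o)$ is the characteristic vector of the single edge $x'y'$ of $C$; hence $\deg_x(\varphi(o))=1$ exactly when $x \in \{x',y'\}$, that is, exactly when $v_ix \in E(o)$. Because $v_{1-i}$ is not a vertex of $o$, this matches the right-hand side.

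The main obstacle is the remaining case, in which $o$ contains both $v_0$ and $v_1$ and decomposes into $v_0$-$v_1$-paths $P^1,P^2$ with second vertex $x^j$ and second-but-last vertex $y^j$. Here the key step is to show that $x^1,y^1,x^2,y^2$ are pairwise distinct, so that $\varphi(o)$ is genuinely the characteristic vector of a two-element edge set of $C$. The inequalities $x^1 \ne x^2$ and $y^1 \ne y^2$ are immediate from $P^1$ and $P^2$ sharing only $v_0$ and $v_1$; for $x^1 \ne y^2$ (and symmetrically $x^2 \ne y^1$) I would invoke \autoref{X}: any such coincidence would force this vertex into $V(P^1) \cap V(P^2) = \{v_0,v_1\}$, and since $x^1$ is a neighbour of $v_0$ different from $v_0$, this would make $v_0v_1$ an edge of $G$, contradicting $d(v_0,v_1) \ge 3$. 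With the four endpoints distinct, $\deg_x(\varphi(o))=1$ iff $x \in \{x^1,y^1,x^2,y^2\}$, iff $x$ is a neighbour of $v_0$ or of $v_1$ in $o$, iff one of $v_0x, v_1x$ lies in $E(o)$. Summing this per-cycle identity over $o \in \mathcal{O}$ modulo two completes the proof.
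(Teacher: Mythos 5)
Your proof is correct and follows essentially the same route as the paper's, which simply reduces to a single cycle by $\mathbb{F}_2$-linearity and appeals to the definition of $\varphi$. You merely spell out the per-cycle case analysis in full (including the useful check that $x^1,y^1,x^2,y^2$ are pairwise distinct in the third case, which the paper leaves implicit), so no further changes are needed.
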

\begin{proof}
    If $\mathcal{O}$ consists of a single cycle, this follows from the definition of $\varphi$. And for general families $\mathcal{O}=(o_i|i\in I)$, it follows by linearity; that is: 
    $
    \varphi(\mathcal{O})= \Delta_{i\in I} \varphi(o_i)$.
\end{proof}

\begin{cor}\label{alg_fact}
Assume \autoref{X}.
    Let $o$ be an edge set of $G$ that is generated by a family $\mathcal{O}$  of cycles of $G$ of length at most $d$.
    Assume that $o$ has degree zero at $v_1$ but has degree two at the vertex $v_0$, and let $x$ and $y$ vertices such that $v_0x,v_0y\in o$.
    Then the support of $\varphi(\mathcal{O})$ is an edge set of the connectivity graph $C$ that has even degree at every vertex except for $x$ and $y$, where it has odd degree.
\end{cor}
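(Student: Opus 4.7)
The plan is to apply \autoref{pre-alg} to each vertex of $C$ in turn and translate the resulting cycle-counts into statements about which edges lie in $o$.

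First I would record two basic observations. Since $v_0 x, v_0 y \in o \subseteq E(G)$, both $x$ and $y$ are neighbours of $v_0$, so they belong to $N = V(C)$ and it makes sense to talk about their degrees in an element of $\mathcal{E}(C)$. Also, under \autoref{X} the vertices $v_0$ and $v_1$ have distance at least three, so no vertex $z$ of $G$ is adjacent to both of them; in particular, no cycle of $G$ can contain both of the edges $v_0 z$ and $v_1 z$, which means that the count in \autoref{pre-alg} is the plain sum, over $i \in \{0,1\}$, of the numbers of cycles in $\mathcal{O}$ containing $v_i z$.

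Next I would use the hypothesis that $o$ is generated by $\mathcal{O}$ in $\mathcal{E}(G)$: for any single edge $e$ of $G$, the number of cycles in $\mathcal{O}$ containing $e$ has the same parity as the indicator $[e \in o]$. Combined with \autoref{pre-alg}, this yields, for every $z \in V(C)$, that the parity of the degree of $z$ in $\varphi(\mathcal{O})$ equals $[v_0 z \in o] + [v_1 z \in o]$ modulo $2$. From here the conclusion falls out by a case split on $z$. If $z \notin \{x,y\}$ then $o$ contains neither $v_0 z$ (because $v_0$ has degree $2$ in $o$ with edges $v_0 x$ and $v_0 y$) nor $v_1 z$ (because $v_1$ has degree $0$ in $o$), so $z$ has even degree in $\varphi(\mathcal{O})$. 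If $z = x$ then $v_0 x \in o$ by assumption and $v_1 x \notin o$, so the parity is $1$ and the degree at $x$ is odd; the case $z = y$ is identical. Since $\varphi(\mathcal{O}) \in \mathcal{E}(C)$, its support is automatically an edge set of $C$, which finishes the argument.

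The only really delicate point is the observation pulled from \autoref{X}. Without the distance-at-least-three hypothesis, \autoref{pre-alg} would combine with the generation relation only up to an inclusion--exclusion correction coming from cycles that contain both $v_0 z$ and $v_1 z$ at once; it is \autoref{X} that rules out any such cycle, so even though its role in the corollary is invisible from the statement, it must be invoked explicitly.
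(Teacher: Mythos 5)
Your proposal is correct and follows essentially the same route as the paper's own proof: apply \autoref{pre-alg}, translate the cycle-counts into membership of $v_0z$ and $v_1z$ in $o$ via the generation relation, and then read off the parities from the degree hypotheses at $v_0$ and $v_1$. Your explicit remark that \autoref{X} forbids a common neighbour of $v_0$ and $v_1$ (so no cycle contributes to both counts at a single vertex $z$) is exactly the observation the paper opens its proof with, just spelled out more carefully.
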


\begin{proof}
By \autoref{X}, the pair $v_0v_1$ is not an edge and no vertex is a neighbour of both $v_0$ and $v_1$.
Now apply \autoref{pre-alg} to $\mathcal O$:
the edge set $o$ contains no edge of the form $v_1w$ for $w \in N(v_1)$. So vertices $w\in N(v_1)$ have even degree in the edge set $\varphi(\Ocal)$. Similarly, each $u \in N(v_0) - x-y$ has even degree in $\varphi(\Ocal)$. 
Finally, $x$ and $y$ have odd degree in $\varphi(\Ocal)$.
\end{proof}

Recall that 
for a vertex $x$ of $Ex_d(v_0,v_1)$, we denote by $x'$ the unique vertex of $G$ of which $x$ is a copy. 

\begin{lem}\label{cor1}\label{lem:SameCompSameV}
Assume \autoref{X}.
If $x$ and $y$ are neighbours of $v_0$ in the same component of the punctured explorer-neighbourhood $Ex_d(v_0,v_1)-v_0-v_1$, then $x'$ and $y'$ are in the same component of the connectivity graph $C$. 
\end{lem}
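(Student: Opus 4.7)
The plan is as follows: take an $x$-$y$-path $P$ inside $Ex_d(v_0,v_1) - v_0 - v_1$ (which exists by hypothesis), close it into a walk at $v_0$ using the edges $v_0x$ and $v_0y$, project the resulting closed walk down to $G$ where it becomes a symmetric difference of short cycles, and then apply \autoref{alg_fact} to extract an edge set of the connectivity graph $C$ with odd degree only at $x'$ and $y'$, witnessing that $x'$ and $y'$ lie in the same component.

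In detail: if $x'=y'$, the conclusion is immediate, so assume $x'\ne y'$. Let $P$ be an $x$-$y$-path in $Ex_d(v_0,v_1)-v_0-v_1$, and let $o$ be the closed walk obtained by attaching the edges $v_0x$ and $v_0y$ of $Ex_d(v_0,v_1)$ to the endpoints of $P$. By \autoref{lem:3.6.}, the closed walk $o$ is generated, in the edge space of $Ex_d(v_0,v_1)$, by a family $\mathcal{O}'$ of cycles of length at most $d$ from the embedded balls $D_d(v_0)'$ and $D_d(v_1)'$.

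Now project to $G$: each embedded ball is isomorphic to the corresponding ball of $G$, so each cycle of $\mathcal{O}'$ projects to a cycle of $G$ of length at most $d$; call the resulting family $\mathcal{O}$. Since the edge-projection $\mathcal{E}(Ex_d(v_0,v_1))\to \mathcal{E}(G)$ is $\mathbb{F}_2$-linear, the image $\bar o$ of $o$ in the edge space of $G$ is generated by $\mathcal{O}$. By construction $\bar o$ has degree $0$ at $v_1$ (the path $P$ avoids $v_1$, and so do the two added edges) and degree $2$ at $v_0$, with incident edges $v_0x'$ and $v_0y'$ (using $x'\ne y'$). Applying \autoref{alg_fact} yields that $\varphi(\mathcal{O})$ is an edge set of $C$ with odd degree exactly at $x'$ and $y'$, so it contains an $x'$-$y'$-path in $C$; in particular $x'$ and $y'$ are in the same component.

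The main obstacle is setting up the projection from $Ex_d(v_0,v_1)$ down to $G$ cleanly: one has to confirm that the natural identification sends cycles of length at most $d$ in the embedded balls to cycles of $G$ of the same length (so that \autoref{alg_fact} may be applied at all), and that the generation relation is preserved by this $\mathbb{F}_2$-linear projection. Once these mildly technical checks are dispatched, the rest of the argument is a mechanical invocation of \autoref{lem:3.6.} and \autoref{alg_fact}.
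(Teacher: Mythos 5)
Your proposal is correct and follows essentially the same route as the paper's own proof: close the $x$-$y$-path into a cycle through $v_0$, generate it by short cycles of $G$ via \autoref{lem:3.6.} (the paper handles your ``projection'' step by citing \autoref{lem:3.5.}), and apply \autoref{alg_fact} together with a parity/handshaking argument to place $x'$ and $y'$ in one component of $C$. Your explicit treatment of the case $x'=y'$ and of the $\mathbb{F}_2$-linear projection is slightly more careful than the paper's, but the argument is the same.
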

\begin{proof}
Since $x,y$ are in the same component, there must be an $x$-$y$-path in $Ex_d(v_0,v_1)-v_0-v_1$. Let $o$ be the cycle of $Ex_d(v_0,v_1)$ obtained by adding the edges $xv_0$ and $yv_0$ to the $x,y$-path at either end. By \autoref{lem:3.6.} and \autoref{lem:3.5.} the cycle $o$ is generated by a family $\mathcal{O}$ of cycles of $G$ of length at most $d$.

Denote by $F$ the support of $\varphi(\mathcal{O})$. 
Let $K$ be the subgraph of $C$ induced\footnote{The graph \emph{induced} by an edge set $F$ is the graph whose vertex set consists of the end vertices of $F$ and whose edge set is $F$.} by $F$.
Denote by $o'$ the symmetric difference over the family $\Ocal$ of cycles of $G$; it has degree zero at $v_1$ and degree two at $v_0$ and $v_0x',v_0y'\in o'$.
By \autoref{alg_fact} applied to $o'$, every vertex in $K$ has even degree, except for $x'$ and $y'$ which have odd degree.
By the handshaking lemma, the component of $K$ containing $x'$ contains $y'$. Thus $x'$ and $y'$ are in the same component of $C$.
\end{proof}

\begin{lem} \label{lem:SameCompDiffV}
Assume \autoref{X}.
Let $x$ be a neighbour of $v_0$ and $y$ be a neighbour of $v_1$ in the same component $K$ of the punctured explorer-neighbourhood $Ex_d(v_0,v_1)-v_0-v_1$. Then $x'$ and $y'$ are in the same component of the connectivity graph $C$. 
\end{lem}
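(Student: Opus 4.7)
The plan is to mirror the strategy of \autoref{lem:SameCompSameV}: I will build a suitable closed walk $o$ in $Ex_d(v_0,v_1)$, decompose it into short cycles of $G$ via \autoref{lem:3.6.}, then apply \autoref{pre-alg} to read off connectivity in $C$ from the parities of vertex degrees in $\varphi(\mathcal{O})$.

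First I would fix an $x$-$y$-path $P$ in $Ex_d(v_0,v_1)-v_0-v_1$, which exists because $x$ and $y$ lie in the common component $K$, and a shortest $v_0$-$v_1$-path $R$ in $G$. Since $v_0$ and $v_1$ lie in the core, $R$ has a unique copy in $Ex_d(v_0,v_1)$ with which I identify it. Writing $a$ for the second vertex of $R$ and $b$ for its second-to-last vertex, \autoref{X} guarantees that $R$ has length at least three, so that $a \neq b$; note also that $a\in N(v_0)$ and $b\in N(v_1)$. I would then form the closed walk $o$ that traverses $v_0x$, then $P$ from $x$ to $y$, then $yv_1$, then $R$ in reverse from $v_1$ back to $v_0$.

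By \autoref{lem:3.6.} combined with \autoref{lem:3.5.}, $o$ is generated by a family $\mathcal{O}$ of cycles of $G$ of length at most $d$, so \autoref{pre-alg} applies. A direct inspection of the edges of $o$ shows that in the generic situation the vertices of odd degree in $F := \varphi(\mathcal{O})$ are exactly $\{x',y',a,b\}$: the only edges of $o$ incident to $v_0$ are $v_0x$ and $v_0a$, and the only ones incident to $v_1$ are $yv_1$ and $bv_1$. A brief subsidiary step shows that $ab$ is an edge of $C$: doubling $R$ yields a closed walk of length at most $d$ through $v_0$ witnessing that every interior vertex of $R$ lies in $D_d(v_0)-v_0-v_1$, so the interior of $R$ is itself an $a$-$b$-path in $D_d(v_0)-v_0-v_1$. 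Hence $a$ and $b$ lie in a common component of $C$.

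The finishing move is a handshake argument across components of $C$: every component must contain an even number of $F$-odd-degree vertices. Since $a$ and $b$ already occupy one common component, distributing $\{x',y'\}$ across two distinct components would leave some component with an odd count, a contradiction; hence $x'$ and $y'$ lie in the same component of $C$. The only bookkeeping to double-check concerns the coincidences $x=a$ or $y=b$ in $Ex_d(v_0,v_1)$, which by uniqueness of core copies force $x'=a$ or $y'=b$; such degeneracies cancel the corresponding edges in the edge space and remove odd-degree vertices in matched pairs from $\{x',y',a,b\}$, so the handshake conclusion persists. I expect this degeneracy bookkeeping to be the only mildly delicate part of the argument.
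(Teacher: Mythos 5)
Your proof is correct, but it takes a genuinely different route from the paper's. The paper first proves a claim (\autoref{cl1}) producing an anchor pair $\hat x\in N(v_0)\cap K$ and $\hat y\in N(v_1)\cap K$ with $\hat x'\hat y'\in E(C)$, splitting into the case where the punctured explorer-neighbourhood is connected (where, like you, it uses a shortest $v_0$-$v_1$-path) and the case where it is disconnected (where it invokes \autoref{lem:3.10.} to find a short cycle through $v_0$, $v_1$ and $K$); it then finishes by applying \autoref{lem:SameCompSameV} twice (once for $x,\hat x$ at $v_0$, once, by symmetry, for $y,\hat y$ at $v_1$) and transitivity. You instead splice the $x$-$y$-path and the shortest $v_0$-$v_1$-path into a single closed walk, so that $\varphi(\mathcal O)$ has exactly the four odd-degree vertices $\{x',y',a,b\}$, and you run the handshake argument across the components of $C$ directly, needing only the adjacency $ab\in E(C)$. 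What your approach buys: no case distinction and no appeal to \autoref{lem:3.10.} --- which is worth noting, since that lemma is stated for $d$-locally $2$-connected graphs, a hypothesis not present in \autoref{X}, so your argument is more self-contained on this point (the paper only ever needs the lemma in the connected case when proving \autoref{equi-def}, so nothing breaks downstream, but your route sidesteps the issue entirely). The cost is the degeneracy bookkeeping for $x'=a$ or $y'=b$, which you flag and handle correctly; note that $x'=b$ and $y'=a$ are automatically excluded because \autoref{X} forbids common neighbours of $v_0$ and $v_1$, and that in the degenerate cases the edge $ab$ of $C$ is still what completes the chain from $x'$ to $y'$, so that subsidiary step is not optional.
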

\begin{proof}
\begin{sublem}\label{cl1}
There are vertices $\hat{x}, \hat{y}\in K$ adjacent to $v_0$ and $v_1$, respectively, so that $\hat{x}'$ and $\hat{y}'$ are adjacent in $C$. 
\end{sublem}
\begin{proof}
{\bf Case 1:} $Ex_d(v_0,v_1)-v_0-v_1$ is connected. Take a shortest $v_0$-$v_1$-path $P$. 
By \autoref{X}, $P$ has length at least three and at most $d/2$; in particular $P\se D_r(v_i)$ for $i=0,1$. Let $\hat{x}$ be its second vertex and $\hat{y}$ be its second but last vertex, which are distinct as $P$ contains at least four vertices. Then $P$ witnesses that $\hat{x}'$ and $\hat{y}'$ are adjacent in $C$.

    {\bf Case 2:} $Ex_d(v_0,v_1)-v_0-v_1$ is not connected. By \autoref{lem:3.10.}, there is a cycle $o$ of length at most $d$ that contains $v_0$, $v_1$ and a vertex of $K$. Let $P$ be a $v_0$-$v_1$-subpath of $o$ that contains a vertex of $K$; note that  $P\se D_r(v_i)$ for $i=0,1$. Let $\hat{x}$ be the second vertex of $P$ and $\hat{y}$ be its second but last vertex. By \autoref{X}, these two vertices are distinct. By definition, $\hat{x}'$ and $\hat{y}'$ are adjacent in $C$. 
    \end{proof}

By \autoref{cor1}, $x'$ and $\hat{x}'$ are in the same component of $C$. Similarly $y'$ and $\hat{y}'$ are in the same component of $C$.
By \autoref{cl1} and since being in the same component is a transitive relation, $x'$ and $y'$ are in the same component of the connectivity graph $C$.     
\end{proof}

\begin{proof}[Proof of \autoref{intermediate}.]
    Assume \autoref{X}. Let $x$ and $y$ be neighbours of $\{v_0,v_1\}$ in the same component of $Ex_d(v_0,v_1)-v_0-v_1$.
    By \autoref{cor1} or \autoref{lem:SameCompDiffV}, respectively, we conclude that $x'$, $y'$ are in the same component of $\mathcal C_d(v_0,v_1)$.
    \end{proof}

Let $G^k$ be the graph obtained from $G$ by subdividing each edge $k$ times.
We obtain the graph $C_{k\cdot d}(v_0,v_1,G^k)'$ from $C_{k\cdot d}(v_0,v_1,G^k)$ by deleting the subdivision vertices of the edge $v_0v_1$ if existent.

\begin{lem}\label{tech1}
Assume $d\geq 2$. 
    The graph $C=C_d(v_0,v_1,G)$ is connected if and only if   $C'=C_{k\cdot d}(v_0,v_1,G^k)'$ is connected.
\end{lem}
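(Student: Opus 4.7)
The plan is to set up a natural projection $\pi\colon V(C')\to V(C)$ and transport connectivity between the two graphs through it. Each vertex of $C'$ is a subdivision vertex in $G^k$ that lies adjacent to some $v_i\in\{v_0,v_1\}$ and sits on a unique subdivided edge of the form $v_iu$ with $u\in N_G(\{v_0,v_1\})\setminus\{v_0,v_1\}$; I define $\pi$ by sending the subdivision vertex to $u$. Deleting the subdivision vertices of the edge $v_0v_1$ (if it exists in $G$) in the definition of $C'$ ensures $\pi$ is well-defined on all of $V(C')$. The fibre $\pi^{-1}(u)$ is a singleton if $u$ is adjacent in $G$ to exactly one of $v_0,v_1$ and otherwise has two elements $s_{v_0,u}$ and $s_{v_1,u}$, where $s_{v_i,u}$ denotes the neighbour of $v_i$ on the subdivided edge $v_iu$ in $G^k$.

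For the forward implication, assume $C$ is connected. I would check two sub-claims: (a) each two-element fibre $\pi^{-1}(u)=\{s_{v_0,u},s_{v_1,u}\}$ lies in one component of $C'$, and (b) for every edge $uw\in E(C)$, some vertex of $\pi^{-1}(u)$ is joined by a path of $C'$ to some vertex of $\pi^{-1}(w)$. For (a), the natural $2(k-1)$-edge walk in $G^k$ from $s_{v_0,u}$ along the subdivided edge to $u$ and then along the subdivided edge to $s_{v_1,u}$ avoids $v_0,v_1$, and closing it up through the edges $v_0s_{v_0,u},s_{v_1,u}v_1$ and a shortest $v_0$-$v_1$-path in $G^k$ of length at most $k\cdot d/2$ produces a closed walk of length at most $2k+k\cdot d/2\leq kd$ through $v_0$ — witnessing that the path sits inside $D_{k\cdot d}(v_0)-v_0-v_1$. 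For (b), an edge $uw\in E(C)$ is witnessed by a path $P\subseteq D_d(v_i)-v_0-v_1$ in $G$; subdividing $P$ and extending at each end through the subdivided edges meeting $v_i$ and $v_j$ (where $v_j$ is chosen so that $w\in N_G(v_j)$) produces a walk in $G^k$ between representatives of the two fibres, and the scaling of closed walks under subdivision (lengths multiplied by $k$) shows that this walk lies in $D_{k\cdot d}(v_i)-v_0-v_1$.

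For the reverse implication, assume $C'$ is connected; it suffices to prove that for every edge $st\in E(C')$ the vertices $\pi(s),\pi(t)$ lie in a common component of $C$ (trivial if $\pi(s)=\pi(t)$). By definition $st\in E(C')$ is witnessed by a path $Q$ in $D_{k\cdot d}(v_j)-v_0-v_1$ in $G^k$; I would extend $Q$ to a walk $\tilde Q$ from $\pi(s)$ to $\pi(t)$ by appending the remaining portions of the subdivided edges at $v_0,v_1$ and verify that $\tilde Q$ remains in $D_{k\cdot d}(v_j)-v_0-v_1$. Reading off the original (non-subdivision) vertices of $\tilde Q$ in order yields a walk in $G$ from $\pi(s)$ to $\pi(t)$: between consecutive original vertices, $\tilde Q$ passes only through subdivision vertices of a single subdivided edge, so each such stretch projects either to a step along the corresponding edge of $G$ or to a backtrack. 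The inverse scaling — a closed walk in $G^k$ of length at most $k\cdot d$ through the original vertex $v_j$ projects to a closed walk in $G$ of length at most $d$ through $v_j$ containing the corresponding original vertices and edges — then shows that each edge of the projected walk lies in $D_d(v_j)-v_0-v_1$, and extracting a path produces the desired witness that $\pi(s),\pi(t)$ are in the same component of $C$.

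The main obstacle I anticipate is the careful length-bookkeeping: verifying on both sides that the relevant closed walks fit within the prescribed ball radii, with the tightness of the bound $k\cdot d$ requiring particular care in the case $v_0v_1\in E(G)$ — which is precisely why $C'$ is defined by the indicated deletion, to prevent the deleted vertices from inducing unwanted components — and at small values of $d$ where the inequality $2k+k\cdot d/2\leq kd$ used in the forward direction is delicate and may need to be verified separately.
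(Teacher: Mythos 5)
Your overall strategy is the same as the paper's: both arguments rest on the natural correspondence between $V(C')$ (the subdivision vertices adjacent to $v_0$ or $v_1$ in $G^k$) and $V(C)=N(\{v_0,v_1\})$, with a two-element fibre over each common neighbour. The paper packages this as ``contract the matching of two-element fibres and obtain a graph isomorphic to $C$'', whereas you transfer connectivity through the projection $\pi$ edge by edge; this is the same route, only spelled out in more detail.

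There is, however, a genuine gap in sub-claim (b) of your forward direction, at the point where you extend the subdivided witnessing path. An edge $uw$ of $C$ is witnessed by a path $P\subseteq D_d(v_i)-v_0-v_1$, but $u$ and $w$ need not be neighbours of $v_i$: possibly $u\in N(v_{1-i})\setminus N(v_i)$, so the only representative of the fibre over $u$ lies on the subdivided edge $uv_{1-i}$. The scaling argument does place the subdivision of $P$ inside $D_{k\cdot d}(v_i)$, but it says nothing about the connector segment of the subdivided edge $uv_{1-i}$ running from $u$ down to that representative: the edge $uv_{1-i}$ need not belong to $D_d(v_i)$ at all (the cheapest closed walk through $v_i$ containing it has length $\mathrm{dist}(v_i,u)+1+\mathrm{dist}(v_0,v_1)$, which can exceed $d$ when both distances are close to $d/2$), and in that case no initial portion of its subdivision lies in $D_{k\cdot d}(v_i)$ either, so your extended walk leaves the ball. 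One can construct small graphs in which, for exactly this reason, the representative of $u$ is adjacent in $C'$ to no representative of $w$ although $uw\in E(C)$; repairing (b) therefore needs an additional idea (for instance routing that connector through the other ball $D_{k\cdot d}(v_{1-i})$ and chaining components), not merely more careful bookkeeping. This is precisely the step that the paper's own proof compresses into the assertion that the map ``induces a graph-isomorphism''. Finally, your caution about small $d$ is warranted in a stronger sense than you indicate: for $d=2$ the two vertices of a fibre over a common neighbour really do land in different components of $C'$ (already for a triangle, where $C$ is a single vertex but $C'$ consists of two isolated vertices), so sub-claim (a) cannot be saved by finer length-counting there; only $d=3$ is recoverable, using that $v_0$ and $v_1$ are then at distance one, so the closing walk has length exactly $3k=kd$.
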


\begin{proof}
As $d\geq 2$, for every common neighbour $x$ of $v_0$ and $v_1$ the subdivision vertices of $v_0x$ and $v_0x$ in $C'$ are joined by an edge (as in $G^k$ there is a path of length exactly $2k$ linking them via $x$). 
Let $M$ be the set of all such edges; note that $M$ is a matching.
We obtain $C''$ from $C'$ by contracting the matching $M$. 

Now we define a map from $V(C)$ to $V(C'')$.
If $x\in V(C)$ is a neighbour of a single $v_i$, we map $x$ to the unique subdivision vertex of $xv_i$ in $V(C'')$.
Otherwise $x$ is in the common neighbourhood of $v_0$ and $v_1$, and we map $x$ to the contraction vertex of the corresponding edge of $M$.
It is straightforward to see that this map is bijective and induces a graph-isomorphism between $C$ and $C''$.
\end{proof}

\begin{lem}\label{tech2}
    $\{v_0,v_1\}$ is a $d$-local 2-separator of $G$ if and only if in the graph $G^k$, the punctured explorer-neighbourhood $Ex_{k\cdot d}(v_0,v_1)-v_0-v_1$ in $G$ has two components 
    that do not contain subdivision vertices of the edge $v_0v_1$.    
\end{lem}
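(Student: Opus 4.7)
The plan is to establish a structural identification between the explorer-neighbourhoods $Ex_d(v_0, v_1)$ in $G$ and $Ex_{k \cdot d}(v_0, v_1)$ in $G^k$, and then read off the effect of puncturing by $v_0$ and $v_1$.

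First I would observe that, by the construction of $G^k$, traversing each original edge via its subdivided path of length $k$ sets up a bijection between closed walks of length at most $d$ in $G$ containing a vertex $v$ and closed walks of length at most $k \cdot d$ in $G^k$ containing $v$. As an immediate consequence, the ball $D_{k \cdot d}(v)$ in $G^k$ is exactly the $k$-subdivision of the ball $D_d(v)$ in $G$.

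Next I would verify that shortest $v_0$-$v_1$-paths in $G^k$ are precisely the $k$-subdivisions of shortest $v_0$-$v_1$-paths in $G$; in particular, the core of $(G^k, v_0, v_1)$ is the $k$-subdivision of the core of $(G, v_0, v_1)$. The labels used in \autoref{def:Expl} (sets of shortest paths from the core) on an original vertex $u$ and its lift $\tilde{u}$ match under this correspondence, and on a subdivision vertex of an edge $uw$ of $D_d(v_i)$ the label is entirely determined by that of $u$ together with the edge $uw$. Hence the identifications in \autoref{def:Expl} are compatible with the subdivision, and $Ex_{k \cdot d}(v_0, v_1)$ in $G^k$ is the $k$-subdivision of $Ex_d(v_0, v_1)$ in $G$.

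The final step analyses the removal of $v_0$ and $v_1$. Subdivision vertices lying on an edge different from $v_0 v_1$ are internal to paths whose endpoints either are non-deleted original vertices or lie within a common component, so they do not alter the component structure. If $v_0 v_1$ is not an edge of $G$ there are no subdivision vertices of $v_0 v_1$ at all, and the components of the two punctured explorer-neighbourhoods correspond bijectively; if $v_0 v_1$ is an edge of $G$, then the subdivided path of length $k$ between $v_0$ and $v_1$ in $G^k$, after deleting its endpoints, retains only the internal edges among its own subdivision vertices and therefore forms exactly one standalone component of $Ex_{k \cdot d}(v_0, v_1) - v_0 - v_1$ in $G^k$. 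Thus the components of $Ex_{k \cdot d}(v_0, v_1) - v_0 - v_1$ that avoid subdivision vertices of $v_0 v_1$ are in bijection with the components of $Ex_d(v_0, v_1) - v_0 - v_1$, giving the claimed equivalence. The main obstacle is making the identification $Ex_{k \cdot d}(v_0, v_1) \cong (Ex_d(v_0, v_1))^{k}$ fully rigorous, particularly checking that the label-based identifications in \autoref{def:Expl} on the two balls $D_{k \cdot d}(v_0)$ and $D_{k \cdot d}(v_1)$ match exactly the corresponding identifications in $G$ for both original and subdivision vertices; once this is pinned down, the puncturing argument reduces to bookkeeping.
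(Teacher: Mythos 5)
Your overall strategy -- identify $Ex_{k\cdot d}(v_0,v_1)$ computed in $G^k$ with the $k$-subdivision of $Ex_d(v_0,v_1)$ and then observe that puncturing creates exactly one extra component, namely the path of subdivision vertices of $v_0v_1$ -- is the natural one, and the paper offers no competing argument (it declares the lemma immediate from the definition). However, your first step is false as stated, and the whole identification rests on it. There is no bijection between closed walks of length at most $d$ through $v$ in $G$ and closed walks of length at most $k\cdot d$ through $v$ in $G^k$: a walk in $G^k$ may reverse direction at an interior subdivision vertex, so it need not be the subdivision of any walk in $G$. Consequently $D_{k\cdot d}(v)$ in $G^k$ is in general strictly larger than the $k$-subdivision of $D_d(v)$. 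Concretely, let $G$ be a triangle on $v,a,b$ with a pendant vertex $c$ attached to $a$, and let $d=3$, $k=3$. Then $D_3(v)$ is just the triangle, but in $G^3$ the closed walk that runs from $v$ to $a$ along the subdivided edge, steps onto the first subdivision vertex $s_1$ of $ac$, and retraces itself has length $8\le 9$, so $s_1$ lies in $D_9(v)$ even though the edge $ac$ is not in $D_3(v)$. In general an edge $xy$ of $G$ contributes to $D_{k\cdot d}(v)$ in $G^k$ a pendant ``stub'' of its subdivision whenever an endpoint has distance less than $d/2$ from $v$, even when $xy\notin D_d(v)$; so $Ex_{k\cdot d}(v_0,v_1)$ is not the $k$-subdivision of $Ex_d(v_0,v_1)$, and the bijection of components you read off from that isomorphism is not yet justified.

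The lemma itself survives, but only after an argument you do not give: one must check that these stubs never change the set of components under consideration. Stubs rooted at a vertex other than $v_0,v_1$ are pendant paths absorbed into the component of their root. The delicate case is a stub of an edge $v_0u$ appearing in $D_{k\cdot d}(v_1)$ (where $v_0u$ need not lie in the ball around $v_1$): after deleting $v_0$ this threatens to become a new component. Here one has to use the label-based identification of \autoref{def:Expl}: the first subdivision vertex $s_1$ of $v_0u$ has $v_0s_1$ as its unique shortest path from the core, this path lies in both balls, so the two copies of $s_1$ are identified; since the full subdivided edge $v_0u$ does lie in $D_{k\cdot d}(v_0)$, the unique copy of $s_1$ remains connected to $u$ after puncturing, and the rest of the stub hangs off it. Your proof flags the label bookkeeping as the main obstacle but assumes the two explorer-neighbourhoods are literally subdivisions of one another, which is exactly where the argument breaks; the stub analysis above (or some replacement for the ball identity) is needed to close the gap.
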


\begin{proof}[Proof:]
    immediate from the definition of the explorer-neighbourhood.
\end{proof}

\begin{proof}[Proof of \autoref{equi-def}.]
Let $d\in \mathbb{N}$, and $G$ be a graph with vertices $v_0$ and $v_1$ of distance at most $d/2$.
Assume that the punctured explorer-neighbourhood $Ex_d(v_0,v_1)-v_0-v_1$ is connected.
Let $k=3$. And construct the graph $G^k$ as described above. Let $X$ be the set of subdivision vertices of the edge $v_0v_1$ if existent, otherwise $X=\emptyset$. 
Then by \autoref{tech2} $(Ex_{k\cdot d}(v_0,v_1)-v_0-v_1)\setminus X$ is connected. So by \autoref{intermediate}, $C'=C_{k\cdot d}(v_0,v_1,G^k)'$ is connected. So by \autoref{tech1},  $C=C_d(v_0,v_1,G)$ is connected.
To summarise, we have shown that if  $Ex_d(v_0,v_1)-v_0-v_1$ is connected, then $C$ is connected.

Conversely, assume that $C$ is connected. 
Let $x$ and $y$ be two arbitrary neighbours of $\{v_0,v_1\}$ in the explorer-neighbourhood $Ex_d(v_0,v_1)$. 
Then $x'$ and $y'$ are in the same component of $C$, so let $P$ be a path joining them. 
Due to \autoref{lem:neigbInH}, for every two consecutive vertices $a'$ and $b'$ on $P$ the vertices $a'$ and $b'$ have unique copies in $Ex_d(v_0,v_1)$ and these unique copies are in the same component of the punctured explorer neighbourhood $Ex_d(v_0,v_1)-v_0-v_1$.  Note that the unique copy of $Ex_d(v_0,v_1)$
for the starting vertex $x'$ of $P$ is $x$ and the unique copy for the end vertex $y'$ of $P$ is $y$. 
Since being in the same component is a transitive relation, $x$ and $y$ are in the same
component of the punctured explorer neighbourhood $Ex_d(v_0,v_1)-v_0-v_1$. 
We have shown that any two neighbours $x$ and $y$ of $\{v_0,v_1\}$ in the explorer-neighbourhood $Ex_d(v_0,v_1)$ are in the same component of the punctured explorer-neighbourhood $Ex_d(v_0,v_1)-v_0-v_1$; thus it is connected. 
\end{proof}

\begin{rem}
It is natural to consider the following strengthening of \autoref{equi-def}. Let $x$ and $y$ in $N(\{v_0,v_1\})$ in $Ex_d(v_0,v_1)$. Then $x$ and $y$ are in the same component of $Ex_d(v_0,v_1)-v_0-v_1$ if and only if $x'$ and $y'$ are in the same component of $C_d(v_0,v_1)$. This fact follows easily from \autoref{equi-def}. To see this add to the graph $G$ an edge from $x'$ to some vertex in every component of  $C_d(v_0,v_1)$ that does not contain $y'$ and apply \autoref{equi-def} in this new graph.
\end{rem}

\begin{rem}
    The results of this section extend to graphs with rational weights by considering corresponding subdivisions and re-scaling $d$ to a natural number.
    The generalisation to real-weights follows from the extension to rational weights as the graphs of this paper are finite.
\end{rem}

\section{The theory of local separators}\label{sec5}

\begin{rem}(Motivation)
    In connectivity theory, the most natural way to decompose a connected graph is to cut it along the articulation points into its two-connected components and its bridges (that is, single edges whose removal disconnects the graph). The interesting pieces of this decomposition are the two-connected components, which in a further step can be decomposed via Tutte's 2-separator theorem into the maximal 3-connected torsos and cycles. In this paper we decompose graphs in a similar way but instead of a global notion of separability, we work with the finer notion of local separators introduced above. Such graph decompositions -- for both local and global connectivity -- come with a decomposition graph that displays how the decomposition pieces are stuck together. In the case of global connectivity the decomposition graphs are always trees. In the finer context of local connectivity they can be genuine graphs. 
In this short section we summarise the basis for graph decompositions. 
\end{rem}

\begin{dfn}[{\cite[Definition 9.3]{carmesin2022local}}]
A \emph{graph decomposition} consists of a bipartite graph $(B, S)$ with bipartition classes $B$ and $S$, where the elements of $B$ are referred to as \enquote{bags-nodes} and the elements of $S$ are referred to as \enquote{separating-nodes}. This bipartite graph is referred to as the \enquote{decomposition graph}. For each node $x$ of the decomposition graph, there is a graph $G_x$ associated to $x$. Moreover for every edge $e$ of the decomposition graph from a separating-node $s$ to a bag-node $b$, there is a map $\iota_e$ that maps the associated graph $G_s$ to a subgraph of the associated graph $G_b$. We refer to $G_s$ with $s \in S$ as a \emph{local separator} and to $G_b$ with $b \in B$ as a \emph{bag}. 

The \emph{underlying graph} of a graph decomposition $(G_x \mid x \in V (B, S))$ is constructed from the disjoint union of the bags $G_b$ with $b \in B$ by identifying along all the families given by the copies of the graphs $G_s$ for $s \in S$. Formally, for each separating-node $s \in S$, its family is $(\iota_e(G_s))$, where the index ranges over the edges of $(B, S)$ incident with $s$.
\end{dfn}

We say that a $d$-local 2-separator $X$ \emph{crosses} a $d$-local 2-separator $Y$ if there is a cycle $o$ of length at most $d$ that contains $X$ and $Y$ and the cycle $o$ alternates between the sets $X$ and $Y$; that is, the set $X$ separates the set $Y$ when restricted to the subgraph $o$. 
As explained formally in \cite{carmesin2022local} and informally in  \autoref{fig:graph_deco}, the set of local separators of a graph decomposition does not contain a pair of crossing local separators. Conversely, any set of local separators without a crossing pair can be turned into a graph decomposition. 
A $d$-local 2-separator is \emph{totally nested} if it is not crossed by any $d$-local 2-separator. Like in Tutte's 2-separator theorem, it is most naturally to decompose a $d$-locally 2-connected graph along its totally nested 2-separators. As shown in \cite{carmesin2022local}, the decomposition torsos are $d$-locally 3-connected or cycles of length at most $d$.

\begin{figure}[htbp]
    \centering
        \includegraphics[width=\textwidth]{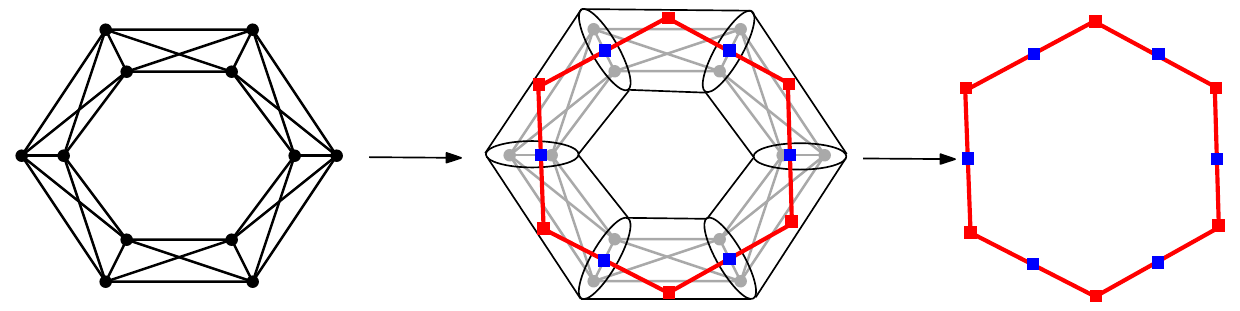} 
    \caption{Applying \autoref{thm:graphdecomp2} to the graph on the left, yields that this graph admits a graph decomposition whose decomposition graph is a cycle  (highlighted in red) and whose bags are $K_4$-s; that is, this graph can be obtained from $K_4$-s by gluing them together along a 6-cycle. The local separators that belong to this graph decomposition are the two vertices in the intersection of two bags whose nodes are adjacent in the red 6-cycle; the overall decomposition cycle is obtained from the red 6-cycle by subdividing each red edge with a blue vertex corresponding to the local separators. These local separators do not cross since they locally do not separate each other, compare the definition of crossing above. 
    }
        \label{fig:graph_deco}
\end{figure}

In this paper we provide an algorithm to compute for every parameter $d$ and every graph $G$, the $d$-local cutvertices of $G$. These local cutvertices directly give rise to a graph decomposition of $G$ as follows. 

\begin{thm}[{\cite[Theorem 4.1.]{carmesin2022local}}]\label{thm:graphdecomp1}
Given $d \in \mathbb N \cup \{ \infty \}$, every connected graph has a graph decomposition of adhesion one and locality $d$ such that all its bags are $d$-locally $2$-connected or single edges. This graph decomposition can be computed from the set of $d$-local cutvertices. 
\end{thm}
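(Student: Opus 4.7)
The plan is to mimic the classical block--cutvertex decomposition, with $d$-local cutvertices of $G$ playing the role of global cutvertices. Define an equivalence relation $\sim$ on $E(G)$ by declaring $e \sim f$ iff $e = f$ or there exists a closed walk in $G$ of length at most $d$ containing both $e$ and $f$; if this turns out not to be transitive in our local setting, take the transitive closure. The equivalence classes will become the candidate bags, and the $d$-local cutvertices will become the separating-nodes.

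For the bipartite decomposition graph $(B,S)$, take the equivalence classes of $\sim$ as the bag-nodes $B$ and the set of $d$-local cutvertices of $G$ as the separating-nodes $S$. Join $b\in B$ and $s\in S$ whenever the vertex corresponding to $s$ is incident with an edge in the class corresponding to $b$. Let $G_b$ be the subgraph of $G$ induced by that class and $G_s$ the one-vertex graph, with $\iota_e$ the obvious inclusion. This automatically gives adhesion one and locality $d$, and the underlying graph recovers $G$ since every edge is in exactly one bag and exactly the $d$-local cutvertices are shared between bags.

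The central technical step is to verify that each bag $G_b$ with more than one edge is $d$-locally 2-connected. By construction, two edges in the same class are linked by a closed walk of length at most $d$, so no $d$-local cutvertex of $G$ can separate edges within $G_b$. The strategy is to use \autoref{equi-def} to re-express $d$-local 2-connectedness of $G_b$ in terms of connectedness of connectivity graphs, and then to argue by contradiction: a hypothetical $d$-local cutvertex $v$ of $G_b$ would lift to a $d$-local cutvertex of $G$, contradicting the construction. The subtlety is that the ball $D_d(v)$ computed inside $G_b$ differs in general from the one computed inside $G$; the comparison must show that closed walks of length at most $d$ in $G$ witnessing local 2-connectedness across $v$ can be ``replayed'' inside $G_b$, using that $v$ is not a $d$-local cutvertex of $G$ and that all the other edges at $v$ in $G$ belong to other bags attached to $v$ through separating-nodes.

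This comparison between balls inside $G$ and inside $G_b$ is the main obstacle; I expect it to require a careful case analysis of how closed walks through $v$ of length at most $d$ can leave and re-enter a bag. The algorithmic content of the theorem --- that the decomposition is computable from the list of $d$-local cutvertices --- then follows routinely: once those cutvertices are known, a BFS-style traversal through closed walks of bounded length computes the edge classes, and the bipartite decomposition graph is read off directly.
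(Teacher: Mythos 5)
This theorem is imported from \cite[Theorem 4.1]{carmesin2022local} and is not reproved in the present paper, so your proposal has to be measured against the known construction there. It has two genuine problems. First, your equivalence relation is degenerate: for any two edges $e=xv$ and $f=vy$ sharing a vertex, the backtracking walk $v,x,v,y,v$ is a closed walk of length $4$ traversing both $e$ and $f$ (exactly the kind of walk used to define the ball $D_d(v)$). So for $d\geq 4$ any two adjacent edges are equivalent, the transitive closure has a single class on every connected graph, and your unique bag is $G$ itself --- which is neither $d$-locally $2$-connected nor a single edge whenever $G$ has a $d$-local cutvertex. Replacing \enquote{closed walk} by \enquote{cycle of length at most $d$} is the right instinct (and then the fact that all edges at a non-local-cutvertex fall into one class does follow, via the generation of cycles of $D_d(v)$ by short cycles plus a parity argument in the spirit of \autoref{alg_fact}), but it does not rescue the construction, for the following reason.

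The deeper problem is that the bags cannot be induced subgraphs of $G$ on edge classes: a bag may have to contain \emph{several copies} of the same $d$-local cutvertex. Take a vertex $v$ with two triangles $vx_1x_2$ and $vy_1y_2$ attached, and join the $x$-side to the $y$-side by a long ladder (length much larger than $d$, built from $4$-cycles). Then $v$ is the unique $d$-local cutvertex, yet every edge is linked to every other by a chain of cycles of length at most $d$, so your construction produces a single bag containing $v$ once --- and $v$ is still a $d$-local cutvertex of that bag. The correct construction splits each $d$-local cutvertex $v$ into one new vertex for each class of the partition of the edges at $v$ induced by the components of $D_d(v)-v$, performs all these splits simultaneously, and takes the components of the resulting graph as bags; in the example the single bag then contains two copies $v'$, $v''$ of $v$, the separating-node $v$ sends two edges to that bag-node, and the decomposition graph acquires a cycle. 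This is precisely why the decomposition graph is a genuine graph rather than a tree, and any approach that keeps each vertex of $G$ as a single vertex inside its bag cannot prove the theorem. Finally, note that you explicitly leave the central verification --- that the bags are $d$-locally $2$-connected --- as an acknowledged obstacle, so even setting aside the structural issue the argument is incomplete.
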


We provide an algorithm to compute for every parameter $d$ and every $d$-locally 2-connected graph $G$, the totally nested $d$-local 2-separators of $G$. These local 2-separators directly give rise to a graph decomposition of $G$ as follows. 

\begin{thm}[{\cite[Theorem 1.2.]{carmesin2022local}}]\label{thm:graphdecomp2} %
For every $d \in \mathbb N \cup \{\infty \}$, every connected $d$-locally $2$-connected graph $G$ has a graph decomposition of adhesion two and locality $d$ such that all its torsos are $d$-locally $3$-connected or cycles of length at most $d$. This graph decomposition can be computed from the totally nested  $d$-local $2$-separators of $G$.
\end{thm}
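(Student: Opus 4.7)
The plan is to mimic the strategy of Tutte's classical 2-separator theorem, replacing global 2-separators by the $d$-local 2-separators developed in this paper. The starting point is the set $\mathcal N$ of all totally nested $d$-local 2-separators of $G$. By the very definition of \emph{totally nested}, no two members of $\mathcal N$ cross, so the informal principle stated after \autoref{def:ConnectivityGraph}'s section header---that a crossing-free family of local separators can be turned into a graph decomposition---should apply. I would first make this principle precise: build a bipartite decomposition graph $(B,S)$ in which $S$ is indexed by $\mathcal N$, and $B$ is indexed by the equivalence classes of vertices of $G$ under the relation \emph{not separated locally by any $X\in\mathcal N$}. Each separating-node $s$ corresponding to $X=\{u,v\}$ is assigned the two-vertex graph $G_s=(\{u,v\},\emptyset)$, and each bag-node $b$ is assigned the torso on its equivalence class together with a ``virtual edge'' for every adjacent separating node. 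The maps $\iota_e$ embed each $G_s$ into the natural copy of $\{u,v\}$ inside its adjacent bag. This immediately gives adhesion~$2$ and locality~$d$.

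The next step is to recover $G$ as the underlying graph of the decomposition, showing that the identification along the $\iota_e$ reconstructs $G$ exactly. The key input here is \autoref{equi-def}: because the connectivity graph cleanly captures which neighbours of $\{u,v\}$ lie in the same local component, one can check that each edge and each non-separator vertex of $G$ ends up in a unique bag, and that the overlaps happen exactly along the pairs in $\mathcal N$.

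The main content---and the step I expect to be hardest---is proving that every torso $T$ is either $d$-locally $3$-connected or a cycle of length at most~$d$. I would argue by contradiction: suppose $T$ admits a $d$-local 2-separator $Y$ that does not trivialise $T$. Using \autoref{equi-def} and the fact that the closed walks witnessing local separation in $T$ arise from closed walks in $G$ of the same length (modulo replacing virtual edges by paths through adjacent bags), I would lift $Y$ to a $d$-local 2-separator $Y^\ast$ of $G$. Nestedness of $\mathcal N$ would then force $Y^\ast$ to either belong to $\mathcal N$---contradicting that the bag $b$ has already been split off along every element of $\mathcal N$ touching it---or to cross some $X\in\mathcal N$, contradicting the assumption that $X$ is totally nested. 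The residual case in which every candidate $Y$ fails to yield a non-trivial lift is exactly when $T$ is a short cycle: then locally every pair of vertices is a $d$-local 2-separator, but none of them yields a proper refinement. The hard part will be carrying out the lift carefully enough that the parameter~$d$ is preserved; here I expect to reuse the machinery of \autoref{lem:3.6.} and \autoref{alg_fact}, which control how cycles of length at most~$d$ in the explorer neighbourhood are generated and how the function $\varphi$ transports them into the connectivity graph.

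Finally, the computability claim follows by observing that each step of the construction---enumerating $\mathcal N$, forming the equivalence classes of vertices, and assembling the bags and separating-nodes---is explicit once the totally nested $d$-local 2-separators are given, so the decomposition is obtained directly from this input and can be handed to the algorithms in \autoref{sec:algorithms}.
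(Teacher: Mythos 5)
First, note that the paper does not prove this statement at all: it is quoted verbatim as Theorem~1.2 of \cite{carmesin2022local}, so there is no internal proof to compare your attempt against; I can only judge the sketch on its own terms. It has the right Tutte-style shape, but there are two genuine gaps. The first is the construction of the bags. You define the bag-nodes as equivalence classes of vertices under the relation \emph{not separated locally by any $X\in\mathcal N$}. In the local setting this is not an equivalence relation and does not produce the right pieces: a $d$-local $2$-separator need not separate $G$ globally (that is the whole point of the notion), so in the cycle-of-$K_4$'s example of \autoref{fig:graph_deco} two vertices lying in distant $K_4$'s are not locally separated by any single separator, and your relation would merge them into one class, collapsing the whole cycle of bags to a single bag. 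This is exactly why the decomposition graph here is a genuine graph rather than a tree: the bags cannot be read off from a partition of $V(G)$, and the construction has to track local components (the sides of the connectivity graph at each separator) and glue them along walks, which is substantially more delicate than the global Tutte construction your sketch transplants.

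The second gap is in the torso analysis. Your contradiction argument offers the dichotomy that the lifted separator $Y^\ast$ is either in $\mathcal N$ or crosses some $X\in\mathcal N$. A third case is possible and is in fact the generic one: $Y^\ast$ fails to be totally nested because it is crossed by some $d$-local $2$-separator $Z$ of $G$ with $Z\notin\mathcal N$, while $Y^\ast$ itself crosses nothing in $\mathcal N$. No contradiction arises then, and this is precisely the situation that produces the cycle torsos: one must show that a maximal family of pairwise crossing $d$-local $2$-separators forces the corresponding torso to be a cycle of length at most $d$, the local analogue of the classical fact that pairwise crossing $2$-separations organise a $2$-connected graph into a cycle of $3$-blocks. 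That structural step is the real content of the theorem and is absent from your sketch; the machinery of \autoref{lem:3.6.} and \autoref{alg_fact} controls the behaviour of a single separator but does not by itself handle families of mutually crossing ones. The computability claim in your final paragraph is unobjectionable once the construction itself is pinned down.
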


\autoref{thm:graphdecomp1} and  \autoref{thm:graphdecomp2} together form the first two steps of a local-global decomposition of every graph $G$.
Indeed, as with the block cutvertex theorem and Tutte's theorem, one applies \autoref{thm:graphdecomp2} to every $d$-locally 2-connected bag of the graph decomposition from \autoref{thm:graphdecomp1}. For $d=\infty$ local connectivity specialises to global connectivity.

\section{Algorithms}\label{sec:algorithms}

In this section we present algorithms to identify the local $1$-separators and local $2$-separators of an unweighted graph. Both algorithms follow the same structure: for every potential local 1-separator or local 2-separator we calculate a graph: the ball of diameter $d$ or the connectivity graph, respectively. A simple connectivity test on this graph determines whether we have a local 1-separator or local 2-separator, respectively. 
In the following, let $R$ be maximum size of a ball of diameter $d$ around a vertex in $G$, where the \emph{size} of a graph is the sum of its vertex-number and its edge-number.

\subsection{Finding $d$-local $1$-separators}\label{sec:alg1separators}

Per \autoref{sec:original-definitions} the ball $D_d(v)$ is the subgraph of $G$ consisting of those vertices and edges of $G$ on closed walks of length at most $d$ containing $v$. We can calculate both the vertex set and edge set of $D_d(v)$ with one breadth-first-search (BFS). A vertex $v \in G$ is called a $d$-local cutvertex if it separates the ball of diameter $d$ around $v$; formally if $D_d(v)- v$ is disconnected. 
The following algorithm is a direct translation of the definition of $d$-local cut-vertices into pseudo-code.

\begin{algorithm}
\caption{Determines the local $1$-separators for a given graph and diameter}
\begin{algorithmic}[1]
\REQUIRE Graph $G$, diameter $d$
\ENSURE Set of vertices that are $d$-local cut-vertices of $G$
\FUNCTION {find1Separators}{$G,d$}
\STATE $S \gets \emptyset$ 
\FORALL {$v \in V(G)$}
\STATE $D \gets $ calculate $D_d(v)$ using a BFS
\IF {$D-v$ is not connected}
\STATE $S \gets S \cup \{v\}$ 
\ENDIF 
\ENDFOR
\RETURN $S$
\ENDFUNCTION
\end{algorithmic}
\end{algorithm}

\begin{lem}\label{lem:time1-sep}
The algorithm for $\textsc{find1Separators}(G,r)$ has time complexity of $O(Rn)$.
On $n$ processors, it has parallel running time $O(R)$.
\end{lem}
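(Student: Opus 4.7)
The plan is to bound the cost of a single iteration of the outer \textbf{for}-loop by $O(R)$ and then combine this bound with the number of iterations to obtain both the sequential and the parallel claim. Fix a vertex $v$ entering the loop. Computing the ball $D_d(v)$ can be carried out by a breadth-first search rooted at $v$ up to depth $\lfloor d/2\rfloor$: a vertex $u$ lies in $D_d(v)$ precisely when its distance from $v$ is at most $d/2$, and an edge $uw$ between two such vertices lies in $D_d(v)$ precisely when $d_G(v,u)+1+d_G(w,v)\le d$. Since the subgraph explored agrees up to a constant factor with $D_d(v)$, whose size is at most $R$ by the definition of $R$, this step runs in time $O(R)$.

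Next, I would argue that the connectivity test on $D-v$ is a further breadth-first (or depth-first) search on the already-computed graph $D_d(v)$, which has size at most $R$, and therefore also costs $O(R)$. The update $S\gets S\cup\{v\}$ is $O(1)$. Thus the per-iteration cost is $O(R)$, and summing over all $n$ iterations of the outer loop yields the sequential bound $O(Rn)$.

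For the parallel running time, the key observation is that the iterations of the \textbf{for}-loop are mutually independent: each iteration only reads the input graph $G$ and operates on its own local copy of $D_d(v)$, with the only shared datum being the output set $S$, into which each processor writes at most once. Assigning one processor per vertex therefore lets all $n$ iterations run simultaneously, each in time $O(R)$, which gives parallel running time $O(R)$.

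No step is a genuine obstacle. The only mild subtlety is to implement the construction of $D_d(v)$ faithfully to its definition (vertices and edges on closed walks of length at most $d$ through $v$) without exploring substantially beyond the ball, which is exactly what the depth-$\lfloor d/2\rfloor$ BFS together with the edge-filter above achieves within time $O(R)$.
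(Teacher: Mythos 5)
Your proposal is correct and follows essentially the same argument as the paper: a distance-capped BFS computes $D_d(v)$ in $O(R)$, a second search tests connectivity of $D-v$ in $O(R)$, the $n$ independent iterations give $O(Rn)$ sequentially, and their independence gives $O(R)$ in parallel on $n$ processors. The extra detail you supply on filtering vertices and edges to realise the ball exactly (rather than just the depth-$\lfloor d/2\rfloor$ BFS tree) is a welcome clarification but does not change the approach.
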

\begin{proof}
The algorithm tests every vertex independently whether it is a $d$-local cutvertex, thus the for-loop in lines $3$-$8$ repeats $n$ times.
By capping the BFS in line $4$ at diameter $d$, it has time complexity $O(R)$. 
Analysing the connectivity of $D-v$ in line $5$ can be done using another breadth-first-search in $O(R)$.
Thus, the time complexity of one iteration of the for-loop is $O(R)$.
Thus, the overall time complexity of the algorithm is $O(Rn)$.

    Since every vertex is tested independently if it is a $d$-local cutvertex, the tests can be done in parallel. Thus, a parallel algorithm runs in $O(R)$ on $n$ processors.
\end{proof}

\subsection{Finding  $d$-local $2$-separators}\label{sec:2-seps}

A \emph{simplified connectivity graph} is obtained from the connectivity graph by deleting edges so that the vertex sets that form components are the same.
To decide whether a pair of vertices is a $d$-local $2$-separator, we calculate a simplified connectivity graph for these vertices. 
\begin{rem}(Motivation)
Since simplified connectivity graphs may have less edges than the connectivity graph itself, algorithms on them tend to run quicker. However, working with simplified connectivity graphs also allows us to detect local separators in the same way by \autoref{equi-def}. Hence working with simplified connectivity graphs is a minor technical detail to improve running times. 
\end{rem}

\begin{algorithm}\label{alg:connecivity}
\caption{Calculate a simplified connectivity graph of the vertices $v_0$ and $v_1$ and diameter $d$}
\begin{algorithmic}[1]
\REQUIRE Graph $G$, vertices $v_0, v_1$, diameter $d$
\ENSURE simplified $C_{d}(v_0, v_1)$
\FUNCTION {generateC}{$G,v_0,v_1,d$}
\STATE $C \gets$ empty graph
\STATE $V(C) \gets \{x \in V(G) \mid x$ is adjacent to $v_0$ or $v_1 \}$
\FOR{$i \in \{0,1\}$}
    \STATE $D_{i} \gets $calculate $D_d(v_i)$ using a BFS
    \STATE $D_{i} \gets D_{i} \setminus \{v_0, v_1\}$
    \STATE calculate the connected components of $D_i$
    \FORALL {component $S$ in $D_i$}
        \STATE $S_C \gets S\cap V(C)$
        \STATE $y \gets$ a vertex of $S_C$
        \FORALL {vertex $z \in S_C\setminus \{y\}$}
            \STATE add edge $yz$ to $C$
        \ENDFOR
    \ENDFOR
\ENDFOR

\RETURN $C$
\ENDFUNCTION
\end{algorithmic}
\end{algorithm}

\begin{lem}\label{lem:timeC}
The algorithm for $\textsc{generateC}(G,v_0,v_1,d)$ has time complexity of $O(R)$.
\end{lem}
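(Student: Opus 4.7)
The plan is to bound the running time of each line of $\textsc{generateC}$ by $O(R)$ and note that the outer for-loop in lines $4$-$15$ runs only twice, contributing a constant factor. Throughout the argument I use the observation that every vertex and edge touched by the algorithm lies inside $D_d(v_0)\cup D_d(v_1)$, each of which has size at most $R$.

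First I would handle the initial setup. Line $3$ computes all neighbours of $v_0$ and $v_1$; since the edges incident to $v_i$ are contained in $D_d(v_i)$ (as $d\ge 1$), the degrees of $v_0$ and $v_1$ are both at most $R$, so line $3$ takes $O(R)$ time assuming $V(C)$ is stored in a hash set for later $O(1)$ membership tests. Next, for each $i\in\{0,1\}$: line $5$ is a capped BFS of depth $d$ from $v_i$, which by the definition of $R$ runs in $O(R)$ time; line $6$ removes at most two vertices together with their incident edges, which is $O(R)$; and line $7$ computes connected components of $D_i$ via a single BFS or DFS, again $O(|D_i|)=O(R)$.

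The remaining work is lines $8$-$14$, where we iterate over components $S$ of $D_i$ and for each compute $S_C=S\cap V(C)$ and add a star of edges inside $S_C$. Computing $S_C$ can be done while listing the vertices of $S$, using the hash set $V(C)$ for $O(1)$ membership tests; summing over all components of $D_i$ this costs $O(|D_i|)=O(R)$. The edges added form a star within each component, so across all components of $D_i$ the number of added edges is at most $|V(C)|\le |N(v_0)|+|N(v_1)|\le R$. Thus lines $8$-$14$ also run in $O(R)$ time.

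Summing the contributions of lines $2$-$15$ and multiplying by the constant factor $2$ from the outer loop yields total running time $O(R)$, as claimed. There is no real obstacle here; the only subtle point is that $V(C)$ must be maintained as a hash set (rather than traversed linearly for each vertex of each component) in order to keep the intersection step in line $9$ at $O(|S|)$ per component instead of $O(|S|\cdot |V(C)|)$.
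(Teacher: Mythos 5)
Your proof is correct and follows essentially the same line-by-line accounting as the paper's own argument: degree bounds give $O(R)$ for computing $V(C)$, the outer loop runs twice, each BFS and component computation is $O(R)$, and the disjointness of the components means the star edges added total at most $|V(C)|\le R$. The only addition is your explicit remark about maintaining $V(C)$ as a hash set for constant-time membership tests, which is a sensible implementation detail the paper leaves implicit.
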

\begin{proof}
The vertices $v_0$ and $v_1$ have each at most $R$ incident edges, thus calculating the vertex set of $C$ is in $O(R)$. 
The outer for loop from line $4$ to $13$ repeats twice. 
Calculating the balls  $D_i$ of diameter $d$ is in $O(R)$, as is calculating the connected components of $D_i$ in line $7$. 
Since the components $S$ of $D_i$ are disjoint, the two nested for-loops in line $8-13$ touch every vertex in $V(C)$ at most once. So adding these edges to $C$ can also be done in $O(R)$.
Combining these, the algorithm has time complexity $O(R)$.
\end{proof}

The \emph{cycle data} at a $d$-local separator $\{v_0,v_1\}$ is either the information that $v_0v_1$ is an edge or the local separator $\{v_0,v_1\}$ has at least three local components or else a cycle $o$ of length at most $d$ that contains both vertices $v_0$ and $v_1$ and interior vertices of two local components of $\{v_0,v_1\}$.
\autoref{alg:cycledata} computes the cycle data for a given $2$-separator.
In the first two cases $\{v_0,v_1\}$ is totally nested right away by the definition of crossing and in final case we check in time $O(d^2)$ whether a local separator of the third type is totally nested, as follows. 

\begin{algorithm}
\caption{Calculate the cycle data for a given local $2$-separator}\label{alg:cycledata}
\begin{algorithmic}[1]
\REQUIRE Graph $G$, vertices $v_0, v_1$, diameter $d$, simplified connectivity graph $C$
\ENSURE a cycle of length at most $d$
\FUNCTION {generateCycleData}{$G,v_0,v_1,d,C$}
    \STATE $c \gets$ number of components of $C\setminus \{v_0,v_1\}$
    \IF {$c \neq 2$ \OR $v_0v_1$ is edge of $G$}
    \RETURN $0$
    \ELSE 
		 \STATE $G \gets G-v_0$
        \STATE $left \gets$ shortest path from $v_1$ to any neighbour of $v_0$ in the first component of $C\setminus \{v_0,v_1\}$
        \STATE $right \gets$ shortest path from $v_1$ to any neighbour of $v_0$ in the second component of $C\setminus \{v_0,v_1\}$
        \RETURN cycle obtained by joining $left,right$ and $v_0$
    \ENDIF
\ENDFUNCTION
\end{algorithmic}
\end{algorithm}

\begin{lem}\label{cross-check}
Given a graph $G$ and the list $L$ of the $d$-local separators of $G$ with cycle data, we can check in time $O(d^2)$ whether a $d$-local separator of $G$ is totally nested. Computing the list $L'\se L$ of the totally nested $d$-local 2-separators takes time $O(d^2\cdot |L|)$. All these computations can be executed in parallel.
\end{lem}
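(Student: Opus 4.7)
The plan is to reduce the nestedness check to a local search on the cycle data, using it as a universal witness. First dispatch the trivial cases: if the cycle data records that $v_0v_1\in E(G)$ or that $\{v_0,v_1\}$ has at least three local components, then by the discussion preceding the lemma $\{v_0,v_1\}$ is automatically totally nested; this costs $O(1)$. The remaining (and only nontrivial) case is when the cycle data is a cycle $o$ of length at most $d$ through $v_0,v_1$ whose two $v_0$-$v_1$-subpaths $P^1,P^2$ have interiors in distinct local components of $\{v_0,v_1\}$.

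In the nontrivial case the key equivalence to establish is: $\{v_0,v_1\}$ is crossed by some $d$-local 2-separator if and only if there exist $u_0\in V(P^1)\setminus\{v_0,v_1\}$ and $u_1\in V(P^2)\setminus\{v_0,v_1\}$ with $\{u_0,u_1\}\in L$. The ``if'' direction is immediate, since the cycle $o$ itself contains $v_0,u_0,v_1,u_1$ in alternating order. For ``only if'', suppose some $\{u_0,u_1\}\in L$ crosses $\{v_0,v_1\}$ along an alternating cycle $o^\ast$ of length at most $d$; then the two $v_0$-$v_1$-subpaths of $o^\ast$ lie in the two local components of $\{v_0,v_1\}$, forcing $u_0$ and $u_1$ to lie in the same local components as the interiors of $P^1$ and $P^2$. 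Using that interior vertices of length-$\leq d$ paths between $v_0$ and $v_1$ in the same local component are interchangeable (a structural consequence of the local 2-separator theory of \cite{carmesin2022local}), one can splice $o^\ast$ with $o$ to produce an alternating witness cycle whose crossing partner lies on $o$ itself.

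Granted the equivalence, the algorithm iterates over the at most $(d/2)\cdot(d/2)$ pairs $(u_0,u_1)$ with $u_i\in V(P^i)\setminus\{v_0,v_1\}$ and performs an $O(1)$ hash lookup in $L$ for each; any hit certifies non-nestedness, and the whole check runs in $O(d^2)$ time. Since the test for each separator in $L$ reads only its own cycle data together with (read-only) access to $L$, the $|L|$ checks are fully independent, yielding sequential time $O(d^2\cdot|L|)$ and parallel time $O(d^2)$ on $|L|$ processors. The main obstacle is the ``only if'' direction of the equivalence: one must argue that every alternating cycle witnessing a crossing can be replaced by one whose other separator already lies on the specific cycle $o$ provided by \autoref{alg:cycledata}, and this is precisely where the structural properties of local separators from \cite{carmesin2022local} enter.
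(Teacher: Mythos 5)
Your proposal matches the paper's own argument: both reduce the nestedness test to checking, for every pair $(a,b)$ with $a$ and $b$ interior to the two $v_0$--$v_1$-subpaths of the cycle from the cycle data, whether $\{a,b\}\in L$, and both justify the ``only if'' direction by appealing to a lemma of \cite{carmesin2022local} guaranteeing that any crossing separator must alternate along that particular cycle. Your version is somewhat more explicit about the trivial cases and about where that external lemma is needed, but the approach and the $O(d^2)$ / $O(d^2\cdot|L|)$ accounting are the same.
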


\begin{proof}
 If a $d$-local separator $\{x,y\}$ crosses $\{v_0,v_1\}$ by a lemma from \cite{carmesin2022local}, the cycle $o$ alternates between $\{x,y\}$ and $\{v_0,v_1\}$. 
We now check whether any pair of vertices $(a,b)$ with $a$ on one of the subpaths of $o$ from $v_0$ to $v_1$ and $b$ on the other subpath if it is in the list $L$ of local separators.
Assuming constant time access to the list $L$, this takes time $O(d^2)$. 
So for all local separators of $L$ we need time $O(d^2\cdot |L|)$.
\end{proof}

\begin{algorithm}
\caption{Find local $2$-separators with cycle data}
\begin{algorithmic}[1]
\REQUIRE Graph $G$, diameter $d$
\ENSURE List of vertex pairs that are $d$-local $2$-separators of $G$
\FUNCTION{find2Separators}{$G,d$}
	\STATE $S \gets \emptyset$ 
	\FORALL {$v_0\in V(G)$}
        \STATE $P \gets \{v_1 \in V(G) \mid dist(v_0,v_1) \leq \frac{d}{2}\}$
        \FORALL{$w \in P$}
		      \STATE $C \gets$ \textsc{generateC}($G,v_0,v_1,d$)
            
		      \IF {$C\setminus \{v_0,v_1\}$ is not connected}
			     \STATE $S \gets S \cup \{(v_0,v_1)\}$ 
            \STATE $cycleData(v_0,v_1) \gets $\textsc{generateCycleData}($G,v_0,v_1,d,C$)
        \ENDIF
        \ENDFOR
\ENDFOR
\RETURN $S$
\ENDFUNCTION
\end{algorithmic}
\end{algorithm}

\begin{lem}\label{nr2}
Finding all $d$-local $2$-separators of a graph $G$ with cycle data can be done in time $O(R^2n)$.
On $n$ processors, it has parallel running time $O(R^2)$. The number of $d$-local $2$-separators is at most $Rn$.
\end{lem}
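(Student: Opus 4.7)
The plan is to analyse the three nested computations of \textsc{find2Separators} separately and then multiply their costs, relying on \autoref{lem:timeC} for the cost of \textsc{generateC} and a short BFS-argument for \textsc{generateCycleData}.

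First I would bound the outer loop. For each fixed $v_0\in V(G)$, computing the set $P$ of vertices at distance at most $d/2$ from $v_0$ is done by a single BFS out of $v_0$ capped at depth $d/2$; by definition of $R$ this BFS visits at most $R$ vertices and edges, so it takes time $O(R)$. Moreover, since $P$ is contained in the ball of diameter $d$ around $v_0$, we have $|P|\le R$, so the inner loop repeats at most $R$ times. This bound on $|P|$ is also what yields the count: summing over $v_0$ gives at most $Rn$ ordered pairs $(v_0,v_1)$, hence at most $Rn$ unordered local $2$-separators, as claimed.

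Next I would bound a single inner iteration. By \autoref{lem:timeC}, \textsc{generateC}$(G,v_0,v_1,d)$ runs in $O(R)$ and produces a graph $C$ whose vertex set $N(\{v_0,v_1\})$ has size at most $R$; the connectivity check on $C\setminus\{v_0,v_1\}$ is therefore another BFS in a graph of size $O(R)$, taking $O(R)$. For \textsc{generateCycleData} I need a brief argument: it performs at most two BFS searches from $v_1$ in $G-v_0$, but each shortest path it returns has length at most $d$, so we may cap the BFS at depth $d$; this keeps the search inside balls of diameter $d$ and so costs $O(R)$ per BFS. Hence each inner iteration costs $O(R)$, the inner loop costs $O(R)\cdot O(R)=O(R^2)$, and the whole outer loop costs $n\cdot O(R^2)=O(R^2 n)$.

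Finally the parallel bound. The outer for-loop treats each $v_0$ independently, writing to disjoint slots of $S$ and $cycleData$, so assigning one processor to each $v_0$ yields parallel running time equal to that of a single outer iteration, namely $O(R^2)$. The main technical point to get right is the cost of \textsc{generateCycleData}; once one observes that the relevant paths have length at most $d$ and therefore the searches can be capped so as to remain within a ball of diameter $d$, the $O(R)$ bound is immediate and everything else is routine bookkeeping.
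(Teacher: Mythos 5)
Your proposal is correct and follows essentially the same route as the paper: bound the outer loop by $n$, the inner loop by $|P|\le R$, each call to \textsc{generateC}, the connectivity test and the capped BFS for the cycle data by $O(R)$, and parallelise over the choice of $v_0$; the counting argument ($n$ choices for $v_0$ times at most $R$ for $v_1$) is also the same. If anything, your accounting is slightly cleaner, since the paper's phrase \enquote{one iteration of the inner for-loop has time complexity $O(R^2)$} should read \enquote{outer for-loop}, exactly as you have it.
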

\begin{proof}

The outer for-loop (line $3$-$12$) repeats $n$ times.
Calculating the set $P$ in line $4$ can be done using a distance-capped BFS in $O(R)$. 
The set $P$ contains at most $R$ vertices. Thus, the inner for-loop (lines $5$-$11$) repeats at most $R$ times.
Due to \autoref{lem:timeC}, calculating the graph $C$ in line $6$ is in $O(R)$.
Since $C$ has at most $2R$ vertices, the connectivity test in line $7$ can be done in $O(R)$.  Calculating the cycle data of $(v_0,v_1)$ requires two breadth-first-searches. We cap the BFS at distance $d$, thus calculating the cycle data in line $9$ is in $O(R)$. Thus, one iteration of the inner for-loop has time complexity $O(R^2)$.
Combined, this implies an overall time complexity of $O( R^2n)$.

Since every vertex is tested independently if it is part of a $d$-local $2$-separator, the tests can be done in parallel. Thus, a parallel algorithm runs in $O(R^2)$ on $n$ processors. For a local 2-separator, we have $n$ choices for the first vertex and then at most $R$ choices for the second vertex, as it must lie within a ball of diameter $d$ around the first vertex. 
\end{proof}

\begin{lem}\label{nr2d}
Finding all totally nested $d$-local $2$-separators of a graph $G$ with can be done in time $O(max(d^2,R)\cdot R \cdot n)$.
\end{lem}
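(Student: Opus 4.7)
The plan is to assemble the lemma as a direct corollary of the two previous results. First, I would invoke \autoref{nr2} to compute, in time $O(R^2 n)$, the complete list $L$ of all $d$-local 2-separators of $G$ together with their cycle data. That same lemma furnishes the bound $|L| \leq Rn$, which is the key quantitative input for the next stage.

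Second, I would invoke \autoref{cross-check}, which tests each element of $L$ for total nestedness in time $O(d^2)$ per separator, and so filters $L$ down to the sublist $L'$ of totally nested $d$-local 2-separators in time $O(d^2 \cdot |L|) = O(d^2 \cdot R n)$. Summing the two contributions gives
\[
O(R^2 n) + O(d^2 R n) \;=\; O\bigl(Rn\cdot(R + d^2)\bigr) \;=\; O\bigl(\max(d^2,R)\cdot R \cdot n\bigr),
\]
which is exactly the claimed bound.

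The only technical point worth flagging is that \autoref{cross-check} implicitly presupposes constant-time access to the list $L$, since for each candidate crossing pair on the cycle $o$ we must look up membership in $L$. This is easily arranged by storing $L$ in a hash table keyed by unordered vertex pairs (or, if one prefers a deterministic model, in a suitable sparse two-dimensional structure built on the fly during the execution of \autoref{nr2}); either choice can be prepared without inflating the asymptotic complexity of the first stage.

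In short, I do not anticipate any substantive obstacle: the lemma is essentially a bookkeeping combination of \autoref{nr2} and \autoref{cross-check}, and the only care needed is in verifying that the constant-time lookup required by the nestedness check is compatible with the data produced by the 2-separator enumeration.
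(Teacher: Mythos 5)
Your proof is correct and follows exactly the paper's own argument: apply \autoref{nr2} to get the list $L$ with $|L|\leq Rn$ in time $O(R^2n)$, then apply \autoref{cross-check} to filter it in time $O(d^2\cdot Rn)$, and combine the two bounds into $O(\max(d^2,R)\cdot R\cdot n)$. Your additional remark about arranging constant-time lookup into $L$ is a sensible implementation detail that the paper leaves implicit in the hypothesis of \autoref{cross-check}.
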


\begin{proof}
By \autoref{nr2}, we can compute the list of $d$-local 2-separators in time $O(R^2n)$, and this list has length at most $Rn$.
So by \autoref{cross-check} computing the sublist of the totally nested local 2-separators takes time $O(d^2\cdot R \cdot n)$.
This gives the desired running time. 
\end{proof}

\begin{eg}
When $G$ is a 2-dimensional grid, then $O(d^2)=O(R)$. So the maximum in the term of \autoref{nr2d} is equal to $R$ and in graphs with $R(d)$ at least as large as in the $2$-dimensional grid. So here we obtain an estimate of the running time by $O(R^2 \cdot n)$, whilst we have the general upper bound of $O(R^3 \cdot n)$.
\end{eg}

It would be most exciting if there was a \lq local analogue\rq\ of the linear time algorithm that computes the Tutte decomposition of 2-connected graphs, as follows.

\begin{oque}
Is there an algorithm that finds all totally nested $d$-local $2$-separators of an $d$-locally 2-connected graph $G$ with $n$ vertices in parallel running time in $O(R)$ on $n$ processors ?   
\end{oque}

\section{Finding structure in large graphs}\label{sec7}

In this section we demonstrate that the structure our algorithm detects in large graphs is indeed the \enquote{right} structure. 
For this we choose to analyse a road network, shown in \autoref{fig1}. Road maps do not come with a hard ground truth. But just looking at them we humans are able to identify some structure, like towns, rural areas and main connecting routes between them. 
We demonstrate that our algorithms are successful in identifying the structure, as a human perceives it.

While the data set of the road network comes with coordinates for the vertices,
it is important to note that our algorithm works on abstract graphs, not on geometric graphs. So the coordinates of the vertices are not part of the input. The reason we choose a geometric problem is that from the drawing many properties are visible with \enquote{bare eyes} but it is unclear how they can be computed. And we shall see that our algorithm is indeed able to compute some of them. So although the geometric information is not available to the algorithm, 
the outputs it produces are consistent with the geometry, and this gives us high confidence that it also will perform well on new examples where a ground truth is not known.

\subsection{Detecting structure in a road network}

We analyse the road network of North Rhine-Westphalia (NRW), see \autoref{fig:nrw_raw}. 
We choose this example, as NRW is a state with separate cities and one big urban-area, where several cities have \enquote{merged}, the Ruhr valley. Using both local $1$-separators and local $2$-separators we are able to identify the cities outside the Ruhr valley, the Ruhr valley and the separate cities in the Ruhr valley. The decomposition graph we obtain is a small structure graph, which retains the connectivity between the cities, see \autoref{fig:decomp}.

We extract the road network of NRW from OpenStreetMap (OSM) \cite{OpenStreetMap}.
This data set has 316000 nodes and 322000 edges. Based on the road network we compute the decomposition graph using local $1$-separators shown in \autoref{fig:decomp}. 
We can see that the algorithm is able to simplify the graph significantly by grouping vertices into locally $2$-connected clusters, while maintaining the characteristic structure. 
Quite often these cluster correspond to single cities (e.g. Cologne, Bonn or Paderborn). However, the algorithm groups the cities of the Ruhr valley in one big cluster. Hence, we take this cluster and analyse it further using local $2$-separators. The fine structure of the cluster is given by the cities of the Ruhr valley and the connections between them, as seen in the cutout of \autoref{fig:decomp}.
The algorithm shows us different levels of urbanisation and connectivity between the cities of NRW.

We construct the first decomposition graph for the local $1$-separators in a four-step process, as follows, see \autoref{fig:cutout_example} for an example. 
 
\begin{description}
    \item [Preprocessing.] The data set of the road network comes with positions of the vertices and no edge-weights. We use the positions to calculate the length of the edges. 
    The data set contains a lot of vertices of degree two that form long paths. Most of these vertices are local cut-vertices, however not interesting ones. Hence, we suppress vertices with degree $2$ as a preprocessing-step. We also delete vertices with degree one, since the other end of the unique edge is a global cut-vertices. The data set does not contain vertices of degree zero. Dealing with vertices of degree one and two beforehand is not strictly necessary, however it speeds up the computing time.
    The result of the preprocessing is a weighted graph with approximately $9000$ vertices and $14000$ edges.
\item[Compute $d$-local $1$-separators.] Using the algorithm detailed in \autoref{sec:alg1separators} we calculate the local cut vertices with $d=17$, as seen in \autoref{fig:exLocalCut}.

\item[Generate decomposition graph.] Using the local cutvertices, calculated in the previous step, we generate the block-cut graph. This is a bipartite graph where one partition class is the set of local cutvertices and the other partition class consists of one vertex for each cluster.

\item[Postprocessing.] To further simplify the graph, we suppress vertices of degree $2$. In the example of \autoref{fig:decomp} the corresponding graph has 224 vertices and 414 edges.
\end{description}

This concludes the decomposition with local $1$-separators. 
As an illustration of what is happening locally with the graph, we use a small cutout of the original graph shown in \autoref{fig:cutout_example}.
To further analyse the structure of any cluster, we repeat the process, this time with local 2-separators instead of local 1-separators.

\begin{figure}[htbp]
    \centering
     \begin{subfigure}[t]{0.23\textwidth}
     \centering        \includegraphics[scale=0.132]{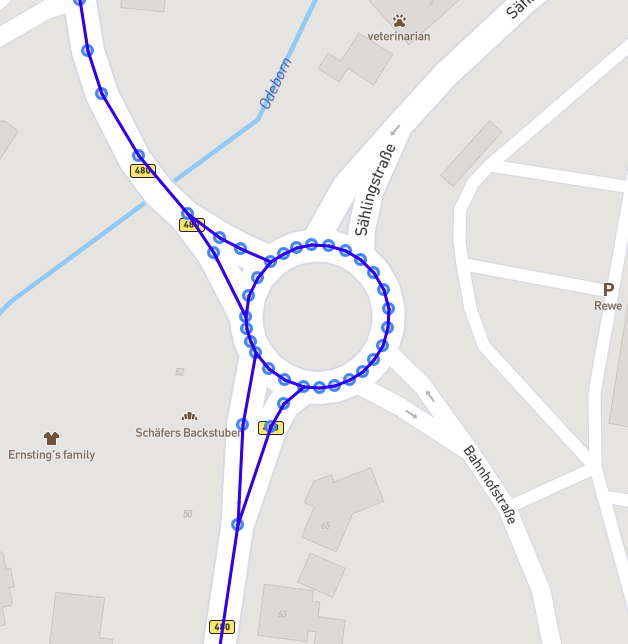} 
    \caption{Detail of the road network of NRW.}
\end{subfigure}
\hfill
    \centering
        \begin{subfigure}[t]{0.225\textwidth}
     \centering        \includegraphics[scale=0.237]{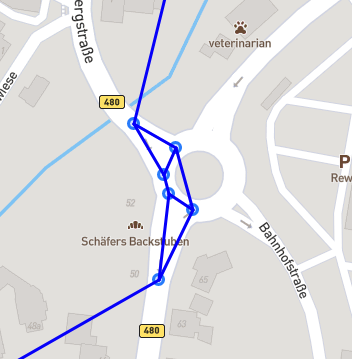} 
    \caption{After suppressing vertices of degree two.}
\end{subfigure}
\hfill
\begin{subfigure}[t]{0.23\textwidth}
     \centering        \includegraphics[scale=0.335]{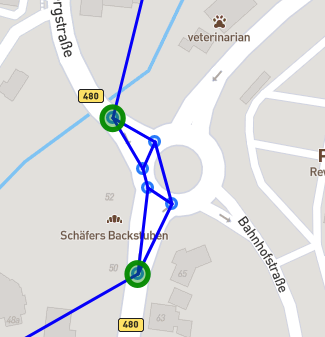} 
    \caption{Calculating local 1-separators (green).}
        \label{fig:exLocalCut}
\end{subfigure}
\hfill
\begin{subfigure}[t]{0.23\textwidth}
     \centering        \includegraphics[scale=0.237]{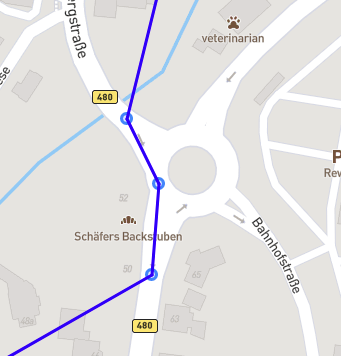} 
    \caption{Detail of block-cut graph.}
\end{subfigure}
        \caption{Detail of the road network and how it changes throughout the steps.}
        \label{fig:cutout_example}
\end{figure}

When analysing the road network using local $1$-separators, the cities of the Ruhr valley are grouped into one cluster. We now analyse this cluster using local $2$-separators. The process follows essentially the same steps as before. See \autoref{fig:cutout_example_2sep} for an example of what is happening locally with the graph.

\begin{description}
    \item [Preprocessing.] The clusters of the decomposition graph in \autoref{fig:decomp} are computed after the initial preprocessing step. So, the extracted cluster is a weighted graph and has no vertices of degree one and two. Thus, we can omit the preprocessing in this case.
\item[Compute $d$-local $2$-separators.] Using the algorithm detailed in \autoref{sec:2-seps} we calculate the local $2$-separators with $d=17$, as seen in \autoref{fig:ruhr_2sep}.
\item[Generate decomposition graph.] Using the local $2$-separators, calculated in the previous step, we generate the decomposition graph. This is a bipartite graph where one partition class is the set of local $2$-separators and the other partition class consists of one vertex for each cluster.

\item[Postprocessing.] We simplify the decomposition graph by suppressing vertices of degree two. 
\end{description}

This concludes the construction of the simplified decomposition graph via local $2$-separators. 

\begin{figure}[htbp]
    \centering
        \begin{subfigure}[t]{0.32\textwidth}
     \centering        \includegraphics[scale=0.30]{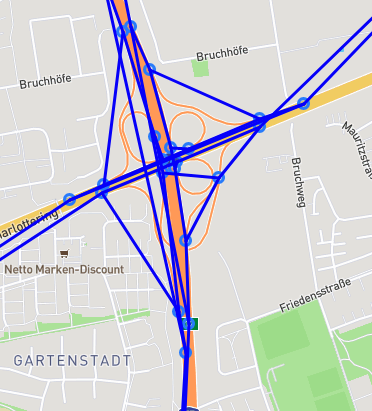} 
    \caption{Part of the road network of the Ruhr valley cluster showing a motorway junction.}
\end{subfigure}
\hfill
        \begin{subfigure}[t]{0.32\textwidth}
     \centering        \includegraphics[scale=0.30]{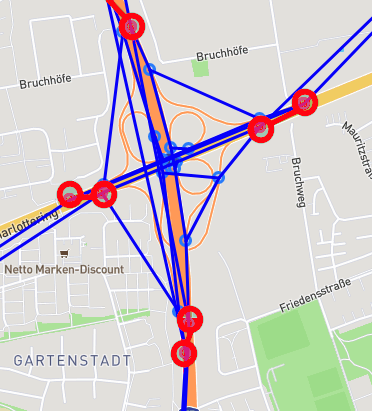} 
    \caption{Calculating local \mbox{$2$-separators} (red).}
        \label{fig:ruhr_2sep}
\end{subfigure}
\hfill
\begin{subfigure}[t]{0.32\textwidth}
     \centering        \includegraphics[scale=0.33]{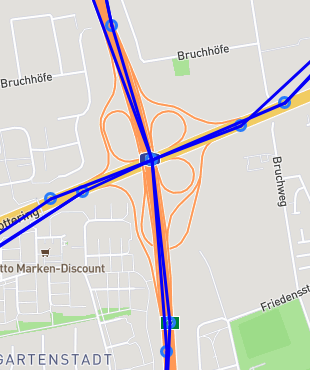} 
    \caption{The decomposition graph.}
\end{subfigure}
        \caption{Detail of the road network and how it changes throughout the steps.}
        \label{fig:cutout_example_2sep}
\end{figure}

\subsection{Different values for $d$}

Choosing different values for $d$ will result in a different number of local separators. 
In \autoref{fig:d} we can see that the number of local cutvertices of the NRW dataset decreases exponentially for increasing $d$.
Different values of $d$ will result in a different resolution of the detected structure of the data set. A smaller $d$ results in smaller clusters and thus a finer resolution. A larger $d$ will group more vertices together and the resolution will be coarser, resulting in a smaller decomposition graph. This is illustrated by the growing size of the biggest cluster with increasing $d$ in \autoref{fig:d}. The images in \autoref{fig:diffDecomps} show the different resolutions of the decomposition graph of the NRW road network in respect to the choice of $d$.

\begin{figure}[htbp]
    \centering
        \includegraphics[scale=0.45]{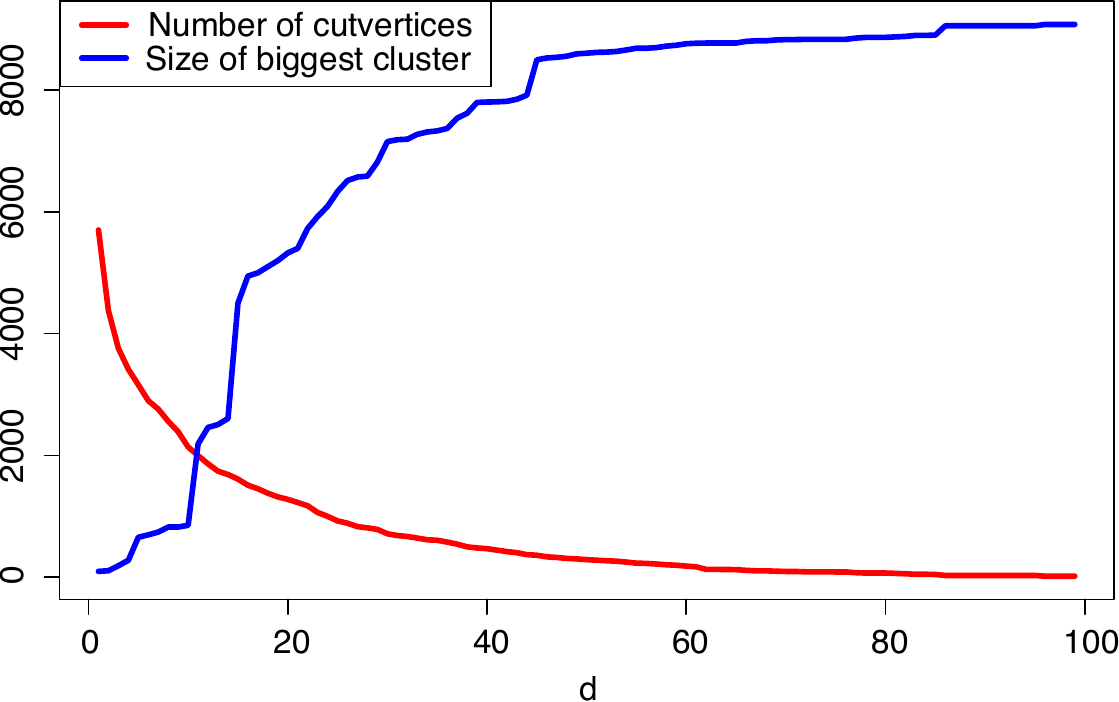} 
    \caption{The number of local cutvertices and the size of the biggest cluster in relation to the value of $d$ when analysing the road map of NRW.}
        \label{fig:d}
\end{figure}

\begin{figure}[htbp]
\begin{subfigure}{.5\textwidth}
\centering
\includegraphics[width=.95\textwidth]{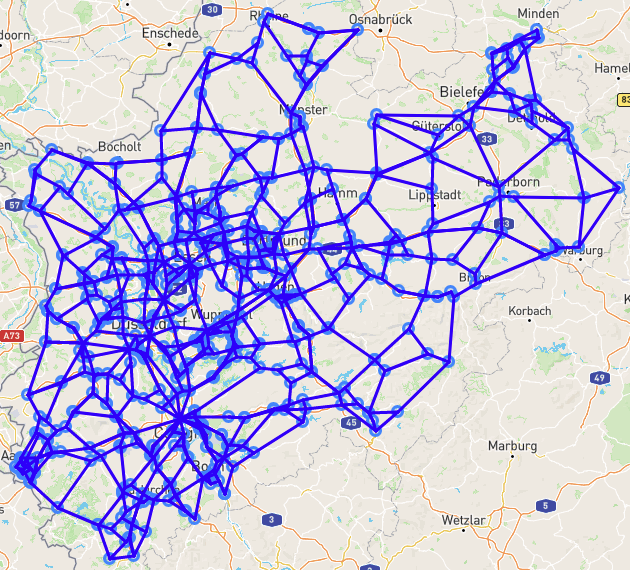}
\caption*{$d=5$}
\end{subfigure}
\hfill
\begin{subfigure}{.5\textwidth}
\centering
\includegraphics[width=.95\textwidth]{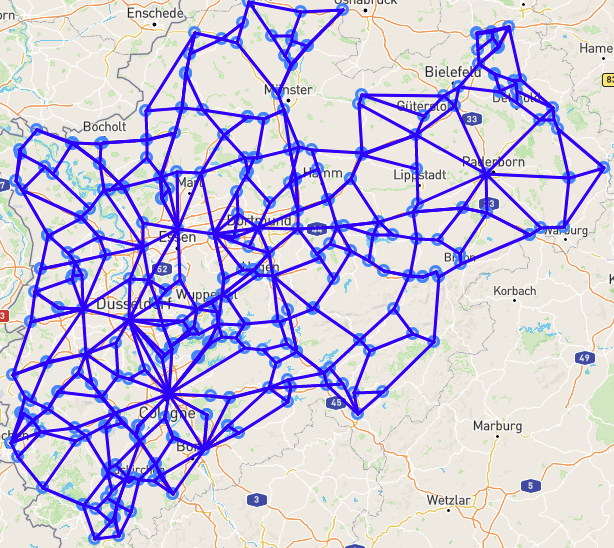}
\caption*{$d=11$}
\end{subfigure}
\hfill
\begin{subfigure}{.5\textwidth}
\centering
\includegraphics[width=.95\textwidth]{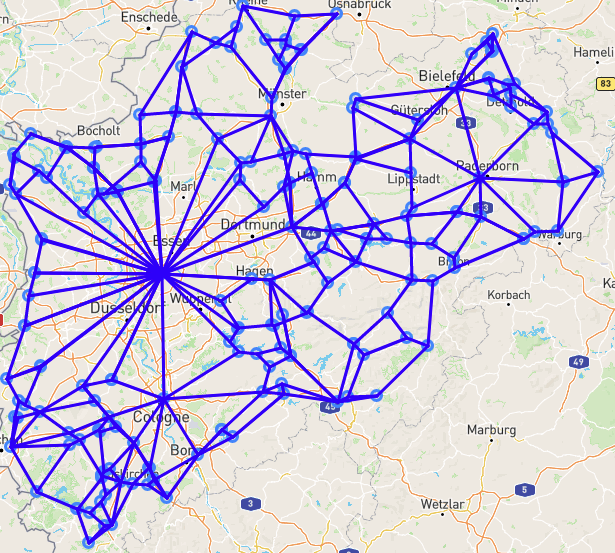}
\caption*{$d=17$}
\end{subfigure}
\hfill
\begin{subfigure}{.5\textwidth}
\centering
\includegraphics[width=.95\textwidth]{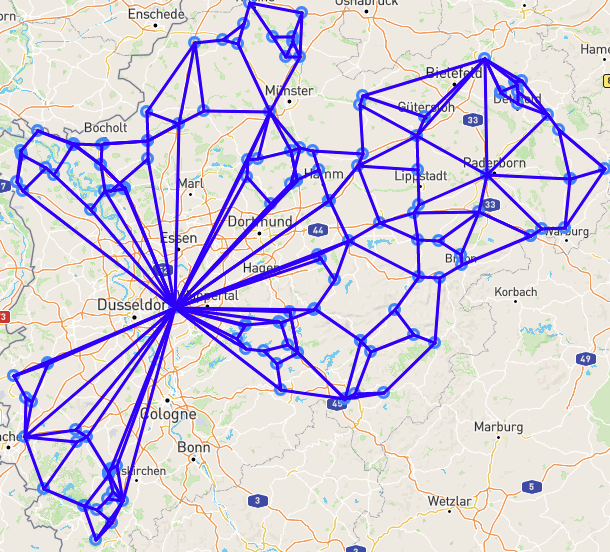}
\caption*{$d=23$}
\end{subfigure}
\caption{The detected structures in the road network of NRW for different $d$.}\label{fig:diffDecomps}
\end{figure}

\newpage

\section{Conclusion} \label{sec:conclusion}

Constructions of graph decompositions via local separators are a novel approach to compute local-global structure in large networks. We have implemented a fast algorithm to compute those, and have tested applicability on real world networks. It is exciting to apply this algorithm to further networks to investigate their clusters. 
In addition to this important and straightforward direction to continue our works, we also provide the following open problem in complexity theory. 

Whilst computing the independent set of a graph is an NP-hard problem~\cite{karp1972}, one direction to compute independent sets is to obtain algorithms for special classes of graphs. For example for a class of graphs of bounded tree-width the independent set can be computed in linear time ~\cite{courcelle1990monadic}. For graphs of large girth, local deletion algorithms in the sense of Hoppen, Lauer and Wormald \cite{{lauer2007large},{hoppen2018local}} can be used to prove asymptotic bounds on the independence number of large girth graphs in general and to approximate the independence number. A feature of these algorithms is that they can be executed in parallel at various vertices, reducing the overall running time. In \cite{bucic2023large} Buci\'{c} and Sudakov study the independence number of graphs provided some local bounds. This is vaguely related to the question we will be asking here. The difference is that we will impose a much stronger local assumption: that the local structure is given by a graph of bounded tree-width. 

Graph decompositions offer a framework that allows us to unify these two approaches, to find fast algorithms that approximate the independence number in graphs that have the global structure of a large girth graph and the local structure of a graph of bounded tree-width. Now we can formally describe this graph class: these are the graphs that admit graph decompositions of large locality and bounded width. As a first step towards this goal, we propose the following problem. Given parameters $r$, $d$ and $w$, denote by $\Gcal(r,d,w)$ the class of graphs that admit graph decompositions of locality $r$, width $w$ such that the decomposition graph is $d$-regular and has girth at least $r$ and additionally assume that all local separators of this graph decomposition are vertex-disjoint. 
We hope that the local deletion algorithm can be extended to the class $\Gcal(r,d,w)$, where the local rule is slightly more enhanced and makes use of the local structure of bounded tree width. 

\begin{oque}
Can you extend results of ~\cite{hoppen2018local} to the class $\Gcal(r,d,w)$, for some integer $r=r(d,w)$ which can be chosen arbitrary large in terms of $d$ and $w$? In particular, can you extend the estimates on the independence number from \cite{hoppen2018local} to $\Gcal(r,d,w)$?
\end{oque}

We expect that the answer to this question is affirmative and this phenomenon to hold for a large class of NP-hard problems. If true, then this means that the algorithms developed in this paper can be applied further to approximate the independence number in classes such as $\Gcal(r,d,w)$.

\bibliographystyle{plain}
\bibliography{literatur.bib}

\begin{thebibliography}{10}

\bibitem{adcock2013tree}
Aaron~B Adcock, Blair~D Sullivan, and Michael~W Mahoney.
\newblock Tree-like structure in large social and information networks.
\newblock In {\em 2013 IEEE 13th International Conference on Data Mining},
  pages 1--10. IEEE, 2013.

\bibitem{bodlaender1998partial}
Hans~L Bodlaender.
\newblock A partial k-arboretum of graphs with bounded treewidth.
\newblock {\em Theoretical computer science}, 209(1-2):1--45, 1998.

\bibitem{bucic2023large}
Matija Buci{\'c} and Benny Sudakov.
\newblock Large independent sets from local considerations.
\newblock {\em Combinatorica}, pages 1--42, 2023.

\bibitem{whitney_surface}
Johannes Carmesin.
\newblock A {W}hitney type theorem for surfaces: characterising graphs with
  locally planar embeddings.
\newblock Preprint, available at \url{http://web.mat.bham.ac.uk/J.Carmesin/}.

\bibitem{carmesin2022local}
Johannes Carmesin.
\newblock Local 2-separators.
\newblock {\em Journal of Combinatorial Theory, Series B}, 156:101--144, 2022.

\bibitem{RajeshStructure}
Johannes Carmesin and Rajesh Chitnis.
\newblock The structure of graphs without cycles of intermediate length.
\newblock In preparation.

\bibitem{StallingsType}
Johannes Carmesin, George Kontogeorgiou, Jan Kurkofka, and Will~J. Turner.
\newblock Towards a {S}tallings type theorem for finite groups.
\newblock In preparation.

\bibitem{courcelle1990monadic}
Bruno Courcelle.
\newblock The monadic second-order logic of graphs. {I}. recognizable sets of
  finite graphs.
\newblock {\em Information and computation}, 85(1):12--75, 1990.

\bibitem{courcelle2012graph}
Bruno Courcelle and Joost Engelfriet.
\newblock {\em Graph structure and monadic second-order logic: a
  language-theoretic approach}, volume 138.
\newblock Cambridge University Press, 2012.

\bibitem{diestel2022canonical}
Reinhard Diestel, Raphael~W Jacobs, Paul Knappe, and Jan Kurkofka.
\newblock Canonical graph decompositions via coverings.
\newblock {\em arXiv preprint arXiv:2207.04855}, 2022.

\bibitem{diestel2005graph}
Reinhard Diestel and Daniela K{\"u}hn.
\newblock Graph minor hierarchies.
\newblock {\em Discrete Applied Mathematics}, 145(2):167--182, 2005.

\bibitem{diestel2016tangles}
Reinhard Diestel and Geoff Whittle.
\newblock Tangles and the {M}ona {L}isa.
\newblock {\em arXiv preprint arXiv:1603.06652}, 2016.

\bibitem{fortunato2010community}
Santo Fortunato.
\newblock Community detection in graphs.
\newblock {\em Physics reports}, 486(3-5):75--174, 2010.

\bibitem{hoppen2018local}
Carlos Hoppen and Nicholas Wormald.
\newblock Local algorithms, regular graphs of large girth, and random regular
  graphs.
\newblock {\em Combinatorica}, 38(3):619--664, 2018.

\bibitem{javed2018community}
Muhammad~Aqib Javed, Muhammad~Shahzad Younis, Siddique Latif, Junaid Qadir, and
  Adeel Baig.
\newblock Community detection in networks: A multidisciplinary review.
\newblock {\em Journal of Network and Computer Applications}, 108:87--111,
  2018.

\bibitem{karp1972}
Richard~M Karp.
\newblock Reducibility among combinatorial problems.
\newblock In {\em Complexity of computer computations}, pages 85--103.
  Springer, 1972.

\bibitem{klepper2023clustering}
Solveig Klepper, Christian Elbracht, Diego Fioravanti, Jakob Kneip, Luca
  Rendsburg, Maximilian Teegen, and Ulrike von Luxburg.
\newblock Clustering with tangles: Algorithmic framework and theoretical
  guarantees.
\newblock {\em Journal of Machine Learning Research}, 24(190):1--56, 2023.

\bibitem{lauer2007large}
Joseph Lauer and Nicholas Wormald.
\newblock Large independent sets in regular graphs of large girth.
\newblock {\em Journal of Combinatorial Theory, Series B}, 97(6):999--1009,
  2007.

\bibitem{macqueen1967some}
James MacQueen et~al.
\newblock Some methods for classification and analysis of multivariate
  observations.
\newblock In {\em Proceedings of the fifth Berkeley symposium on mathematical
  statistics and probability}, volume~1, pages 281--297. Oakland, CA, USA,
  1967.

\bibitem{nevsetvril2012sparsity}
Jaroslav Ne{\v{s}}et{\v{r}}il and Patrice Ossona~de Mendez.
\newblock {\em Sparsity: graphs, structures, and algorithms}.
\newblock Springer, 2012.

\bibitem{newman2004finding}
Mark~EJ Newman and Michelle Girvan.
\newblock Finding and evaluating community structure in networks.
\newblock {\em Physical review E}, 69(2):026113, 2004.

\bibitem{OpenStreetMap}
{OpenStreetMap contributors}.
\newblock {OpenStreetMap }.
\newblock \url{ https://www.openstreetmap.org }, 2023.

\bibitem{GMX}
Neil Robertson and Paul~D Seymour.
\newblock Graph minors. {X}. {O}bstructions to tree-decompositions.
\newblock {\em J.~Combin.\ Theory, Series B}, 52:153--190, 1991.

\bibitem{rosvall2008maps}
Martin Rosvall and Carl~T Bergstrom.
\newblock Maps of random walks on complex networks reveal community structure.
\newblock {\em Proceedings of the national academy of sciences},
  105(4):1118--1123, 2008.

\bibitem{von2007tutorial}
Ulrike Von~Luxburg.
\newblock A tutorial on spectral clustering.
\newblock {\em Statistics and computing}, 17(4):395--416, 2007.

\end{thebibliography}
\end{document}